\newcommand{\arxiv}[1]{#1} 
\newcommand{\EC}[1]{} 
\newcommand{\shortcite}[1]{\cite{#1}}
\newcommand{\omt}[1]{}
\newcommand{\xhdr}[1]{ \vspace{-0.03in} \paragraph*{\bf {#1}}}
\newcommand{\yhdr}[1]{ \subsubsection{\bf {#1}}}
\def\eps{\varepsilon}
\def\squarebox#1{\hbox to #1{\hfill\vbox to #1{\vfill}}}
\newcommand{\qed}{\hspace*{\fill}\vbox{\hrule\hbox{\vrule\squarebox{.667em}\vrule}\hrule}\smallskip}
\newenvironment{proof}{\noindent{\bf Proof:~~}}{\(\qed\)}
\newcommand{\figureswraper}[1]{#1} 
\newtheorem{theorem}{Theorem}[section]
\newtheorem{claim}[theorem]{Claim}
\newtheorem{example}[theorem]{Example}
\newcommand{\p}[1]{\left(#1\right)}
\newcommand{\argmin}{\operatornamewithlimits{argmin}}
\newcommand{\argmax}{\operatornamewithlimits{argmax}}
\begin{document}
\title{Planning Problems for Sophisticated Agents with Present Bias} 
\date{}

%\begin{abstract}
%\input{abstract}
%\end{abstract}
%
%\maketitle
%
%
%

\author{  
Jon Kleinberg
\thanks{
Cornell University.
Email: kleinber@cs.cornell.edu.
}
 \and 
Sigal Oren
\thanks{
Ben-Gurion University of the Negev.
Email: orensi@cs.bgu.ac.il.
}
 \and 
Manish Raghavan
\thanks{
U.C. Berkeley.
Email: manishraghavan@berkeley.edu.
}
}

\begin{titlepage}
\maketitle

\begin{abstract}
{\em Present bias}, the tendency to weigh costs and benefits incurred in the present too heavily, is one of the most widespread human behavioral biases. It has also been the subject of extensive study in the behavioral economics literature. While the simplest models assume that decision-making agents are {\em naive}, reasoning about the future without taking their bias into account, there is considerable evidence that people often behave in ways that are {\em sophisticated} with respect to present bias, making plans based on the belief that they will be present-biased in the future. For example, committing to a course of action to reduce future opportunities for procrastination or overconsumption are instances of sophisticated behavior in everyday life. 

Models of sophisticated behavior have lacked an underlying formalism
that allows one to reason over the full space of multi-step tasks that
a sophisticated agent might face, and this has made it correspondingly
difficult to make comparative or worst-case statements about the
performance of sophisticated agents in arbitrary scenarios. 
In this paper, we incorporate the framework of sophistication 
into a graph-theoretic model that we used in recent work for
modeling naive agents.
This new synthesis of two formalisms --- sophistication and graph-theoretic
planning --- uncovers a rich structure that wasn't
apparent in the earlier behavioral economics work on this problem, 
including a range of findings that shed new light on sophisticated behavior.

In particular, 
our graph-theoretic model makes two kinds of new results possible. First, we give tight worst-case bounds on the performance of sophisticated agents in arbitrary multi-step tasks relative to the optimal plan, along with worst-case bounds for related questions. Second, the flexibility of our formalism makes it possible to identify new phenomena about sophisticated agents that had not been seen in prior literature: these include a surprising non-monotonic property in the use of rewards to motivate sophisticated agents; a sharp distinction in the performance of agents who overestimate versus underestimate their level of present bias; and a framework for reasoning about {\em commitment devices} that shows how certain classes of commitments can produce large gains for arbitrary tasks.

\end{abstract}

\thispagestyle{empty}
\end{titlepage}

\section{Introduction}
In many everyday situations, people have a tendency to focus too heavily
on costs and benefits that are incurred immediately, rather than
balancing them against costs and benefits in the future.
We often procrastinate, for example, because we don't feel like
doing a task in the present moment, even when there are concrete
reasons why doing it now would be a better decision than doing it later.
Or we consume or spend too much because the immediate rewards
are so salient that they outweigh greater long-term benefits.

In behavioral economics, these errors in decision-making are
collectively referred to as {\em present bias}
\cite{akerlof-procrastination,pollak-time-inconsist,strotz-time-inconsist},
and they constitute one of the most widespread human behavioral biases.
Due to its range of applicability, present bias has been the
subject of extensive study 
\cite{frederick-time-inconsist-surv,dellavigna2007psychology},
including both empirical and experimental components 
as well as theoretical models.

\xhdr{Modeling Present Bias}
We imagine an agent contemplating a sequence of decisions in which
they incur a payoff of $u_t$ at a point $t$ steps into the future.
Classical models of planning posit that 
an agent who is consistently discounting the future will value this
payoff in the present step at $\delta^t u_t$, for a fixed parameter
$\delta \leq 1$.
This form of exponential discounting has the property that in the next step,
the relative values of all payoffs look just as they did in the present step,
only shifted by a factor $\delta$.

Most current
models of present bias begin from the principle of {\em quasi-hyperbolic
discounting} \cite{laibson-quasi-hyperbolic}, which for our
purposes here can be described as follows.
The behavior of a present-biased agent is governed not just by
standard exponential discounting but also by an
additional parameter $b > 1$.
Payoffs that the agent incurs in the present step are {\em increased} by
a factor of $b$ relative to all others, whereas payoffs $t \geq 1$
steps into the future continue to be reduced by $\delta^t$.
Thus, a present-biased agent perceives the payoff $u_0$ in the present step
as $b u_0$, and it perceives the payoff $u_t$ incurred $t$ steps from now,
for $t \geq 1$, as $\delta^t u_t$.
This means that the present has a fundamentally different payoff structure
than every other step; it is consistent with a wide range of
studies showing that people view the difference between ``today''
and ``tomorrow'' in a qualitatively different way than the difference
between ``$t$ days from now'' and ``$t+1$ days from now,'' for any
$t \geq 1$
\cite{frederick-time-inconsist-surv}.

The issues that we are interested in here emerge already when $\delta = 1$ ---
i.e. when the future is not discounted, so that only the present-bias 
parameter $b > 1$ is playing a role. {In this paper we focus principally
on the case of $\delta = 1$ and $b > 1$ as it allows us to separate to
an extent the role of present bias from the effects of traditional
exponential discounting.}

\xhdr{Naivete and Sophistication}
Perhaps the fundamental distinction among present-bias models 
based on quasi-hyperbolic discounting
is the contrast between {\em naive agents} and 
{\em sophisticated agents}, an issue brought into focus by influential work of 
O'Donoghue and Rabin 
\shortcite{odonoghue-now-or-later,odonoghue-choice-procrastination}.
Naive agents weigh the present payoff more highly than future ones
(by the underlying factor of $b > 1$), 
but when planning for the future, they naively believe that they will
not suffer from this bias in future steps.
Sophisticated agents also weigh the present payoff more highly than future ones,
but they formulate plans with the understanding that
they will also experience this bias in their future decision-making.

It is useful to work through the differences between naive and
sophisticated planning --- and the contrast of both with optimal
planning --- in some simple concrete instances.
We begin with two examples adapted from 
O'Donoghue and Rabin 
\shortcite{odonoghue-now-or-later,odonoghue-choice-procrastination}.

\begin{quote}
{\em A 2-period example.}
First, suppose that a present-biased agent with bias parameter $b > 1$
can perform a task today for a cost of $1$, or they can perform it
tomorrow for a cost of $c$, where $1 < c < b$.
(Imagine for example, that they need to finish writing a report,
and are deciding on Thursday whether to do it Thursday night or Friday night;
having to work late Friday night comes at a higher cost by a factor of $c$.)
Both the naive and the sophisticated agent will reason that performing
the task today comes at a perceived cost of $b$ (since that is $b$ times 
higher than its true cost, reflecting their distaste for having to do things
now), while performing the task tomorrow comes at a perceived cost of
$c < b$ (the true cost without multiplication by $b$).
As a result, both will wait and do the task tomorrow; and in this way,
both are behaving sub-optimally, since they could achieve a cost of $1 < c$
by doing the task today.
\end{quote}

There is already a crucial subtlety in the 2-period example, 
in that the naive and sophisticated
agents are making the same sub-optimal choice but for different reasons.
The naive agent's reasoning is in a sense
simpler: they mistakenly believe that when
tomorrow comes, they will perceive the cost of performing the task as $c$.
The sophisticated agent knows that tomorrow they will perceive the cost
as $bc$, but this is not part of their optimization: they want to minimize
the cost as they perceive it in the present, which is $b$ times the cost
in the present period, plus the true cost in all future periods.

In this sense, the easiest way to think about the behavior of a sophisticated
agent is in terms of the following metaphor from the behavioral economics
literature.
Imagine that when any agent --- optimal, naive, or sophisticated ---
engages in planning over multiple time periods, the decision-making in each
period is controlled by a conceptually different copy of the agent:
in period $t$, the ``period-$t$ self'' reasons about what to do,
and then hands off control of future decision-making to the 
``period-$(t+1)$ self.''
The conceptual
value in thinking about different selves is that for the period-$t$ self,
the present is in period $t$, and that can induce a potentially different
decision problem than the one faced by the period-$(t+1)$ self,
for whom the present is in period $t+1$.

Viewed this way, the different types of agents think about
cost-minimization as follows.
\begin{itemize}
\item An optimal agent in period $t$ is simply trying to 
minimize the sum of the costs incurred by itself and all future selves
(i.e. all period-$t'$ selves for $t' \geq t$).
\item 
A naive agent in period $t$ is selfish: 
it is also minimizing the sum of costs to its present and future selves,
but through a weighted sum that values the cost
to its (present) period-$t$ self a factor of $b$ times higher.
However, it naively believes that all its future selves will reason
differently, and in particular as an optimal agent would.
So it hands off control of future decision-making with a mistaken
understanding of how this decision-making will take place.
\item Finally, a sophisticated agent in period $t$ is also selfish:
it is maximizing the same weighted sum of costs to present and future
selves that the naive agent is, weighing the cost to the present
self a factor of $b$ higher.
However, it knows that when it hands off decision-making control to
its period-$(t+1)$ self, this period-$(t+1)$ will also be a sophisticated
agent with present-bias parameter $b$.
\end{itemize}

This is the sense in which the sophisticated agent can be aware of its bias
but still behave sub-optimally: like the naive agent,
it wants to favor its present self
(by a factor of $b$) using an objective function that also includes 
the costs to future selves; but unlike the naive agent,
it is realistic about the effect of its present behavior on these costs.

The distinction between naive and sophisticated agents shows up strongly
in the second of our two examples.

\begin{quote}
{\em A 3-period example.}
Now suppose that a present-biased agent 
can perform a task Thursday for a cost of $1$, 
Friday for a cost of $c$, or Saturday for a cost of $c^2$;
and suppose we choose $c$ so that $1 < c < b < c^2$.
The naive agent believes that if it waits until Friday, then its
``Friday self'' will behave optimally and
do it right away (rather than incurring the higher cost on Saturday).
Unfortunately, when Friday comes, the naive agent prefers Saturday
to Friday, and so incurs a cost of $c^2$.
The sophisticated agent knows that its Friday self will face a version
of our previous 2-period example and will wait until Saturday, incurring
a cost of $c^2$.
Hence on Thursday, the sophisticated agent faces the following binary choice:
do the task now, for a perceived cost of $b$ (since that is $b$
times the true cost); or don't do the task now, in which case decision-making
control is handed off to the Friday self for an eventual cost of $c^2$.
Since $b < c^2$, the sophisticated agent does the task right away,
which is the optimal decision.
It is making this decision 
because it still values the cost incurred by its future selves
(just at a factor of $b$ less), and in this case the consequence of delay
on its future cost is simply too high ($c^2 > b$) to justify delaying.
\end{quote}

There is extensive evidence, experimentally and in everyday life, 
that both naive and sophisticated reasoning play a large role in
different contexts
\cite{frederick-time-inconsist-surv,dellavigna2007psychology}.
We see the effects of naive reasoning when people repeatedly push off
a task into the future, always assuming that they'll get to it soon.
We see the effects of sophisticated reasoning, for example, when people
lock themselves into savings plans that make changes costly,
or when they spend effort to eliminate distractions from their work
environment in order to reduce opportunities for procrastination.

However, the complexity of sophisticated agent behavior has meant
that the range of existing theoretical results on the problem
have in many cases required different, distinct formulations and analyses.
As a result, despite the long history of work on 
modeling sophisticated behavior theoretically,
it has been a challenge to speak in general about a sophisticated
agent facing an arbitrary multi-step task,
and to quantify over the set of all such tasks
so as to be able to make comparative statements or worst-case guarantees.

To take some stark examples of the limitations in our knowledge,
there has not been a framework to answer questions such as the following:
How bad can a sophisticated agent's performance be relative to the
optimum, for an arbitrary planning problem?
How does the worst-case performance of a naive agent compare to
the worst-case performance of a sophisticated agent?
Are there small systematic ways of modifying a task to make it more
favorable to the behavior of a sophisticated agent?

\xhdr{A Graph-Theoretic Model for Sophisticated Agents}
Our goal in the present work is to explore a framework capable 
of producing precise guarantees about the performance of sophisticated
agents in general settings.
For this purpose, we employ a graph-theoretic model of planning
with present bias introduced in recent work by 
two of the authors of the present paper 
\cite{ko-ec14}.
In this earlier work, we used it to analyze naive behavior, and
here we extend it to analyze sophisticated behavior, as well as
common generalizations of the two.
Given the major differences between naive and sophisticated behavior,
the first issue is to adapt the graph-theoretic model itself;
following this, we will see that the underlying questions about
agent behavior become fundamentally different in the naive
and sophisticated cases, and reveal a range of new phenomena.

%\begin{figure}[t]
%\begin{center}
%\includegraphics[width=1.85in]{figs/soph-intro-ex01}
%\hspace*{0.05in}
%\includegraphics[width=1.75in]{figs/soph-intro-ex02}
%\end{center}
%\caption{A present-biased agent must choose a path from $s$ to $t$, where $c$ is slightly smaller than $b$.}
%\label{fig:intro-ex01}
%\end{figure}

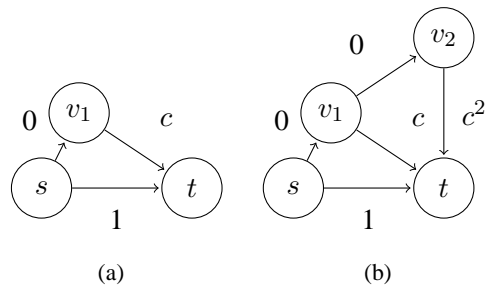
\begin{figure}[ht]
  \centering
  \subfloat[\label{fig:one-fan}]{
    \centering
    \begin{tikzpicture}[->,shorten >=1pt,auto,node distance=2cm, thin, minimum
      size=0.8cm]
      \node (s) [circle, draw] at (0,0) {$s$};
      \node (v1) [circle, draw] at (0.5,1) {$v_1$};
      \node (t) [circle, draw] at (2,0) {$t$};
      
      \path
      (s) edge node {0} (v1)
      (s) edge node [below] {1} (t)
      (v1) edge node {$c$} (t)
      ;
    \end{tikzpicture}
  }
  ~ 
  \subfloat[\label{fig:two-fan}]{
    \centering
    \begin{tikzpicture}[->,shorten >=1pt,auto,node distance=2cm, thin, minimum
      size=0.8cm]
      \node (s) [circle, draw] at (0,0) {$s$};
      \node (v1) [circle, draw] at (0.5,1) {$v_1$};
      \node (v2) [circle, draw] at (2,2) {$v_2$};
      \node (t) [circle, draw] at (2,0) {$t$};
      
      \path
      (s) edge node {0} (v1)
      (s) edge node [below] {1} (t)
      (v1) edge node {$c$} (t)
      (v1) edge node {0} (v2)
      (v2) edge node {$c^2$} (t)
      ;
    \end{tikzpicture}
  }
%  ~
%  \subfloat[\label{fig:three-fan}]{
%    \centering
%    \begin{tikzpicture}[->,shorten >=1pt,auto,node distance=2cm, thin]
%      \node (s) [circle, draw] at (0,0) {$s$};
%      \node (v1) [circle, draw] at (0.5,1) {$v_1$};
%      \node (v2) [circle, draw] at (2,2) {$v_2$};
%      \node (v3) [circle, draw] at (3.5,1) {$v_3$};
%      \node (t) [circle, draw] at (2,0) {$t$};
%      
%      \path
%      (s) edge node {0} (v1)
%      (s) edge node [below] {1} (t)
%      (v1) edge node {$c$} (t)
%      (v1) edge node {0} (v2)
%      (v2) edge node {$c^2$} (t)
%      (v2) edge node {0} (v3)
%      (v3) edge node {$c^3$} (t)
%      ;
%    \end{tikzpicture}
%  }
  \caption{A present-biased agent must choose a path from $s$ to $t$, where $c$ is slightly smaller than $b$.}
  \label{fig:intro-ex01}
\end{figure}

First, the model itself has the following simple but expressive structure.
There is a directed acyclic graph $G$, with costs on its edges,
representing an implicit
state space underlying an agent's set of future decisions.
The agent must travel from $s$ to $t$ in the graph $G$, and
it seeks an $s$-$t$ path that will minimize its cost in doing so.
The complication is that when the agent is at a node $v$,
the costs of all edges out of $v$ are multiplied up by a 
factor $b > 1$.
This increase in the costs out of the current node captures the
effect of present bias: the cost the agent is about to incur,
due to its upcoming step out of $v$, appears to be larger
than it did when viewed from elsewhere in the graph.

It is useful to adapt the language of present and future selves
to the graph-theoretic context.
When the agent at node $v$ decides to follow an edge to node $w$,
it is handing off decision-making control from its ``node-$v$ self''
to its ``node-$w$ self.''
The naive agent believes (mistakenly) that the node-$w$ self will
plan a minimum-cost path to the target node $t$.
The sophisticated agent correctly understands that the node-$w$ self
will also be a sophisticated present-biased agent. 
It therefore chooses
this next node $w$ with this consideration
in mind, to minimize $b$ times the
cost of the immediately next edge (the cost to its present self, scaled up
by $b$) plus the cost of all remaining edges on the path that
its future selves will collectively construct.

As an illustration, we can easily encode the two O'Donoghue-Rabin
examples discussed earlier via the two small graphs in 
Figure \ref{fig:intro-ex01}.
In Figure \ref{fig:one-fan}, representing the 2-period example,
the direct edge from $s$ to $t$ corresponds to performing the task
immediately, while moving from $s$ to $v_1$ corresponds to waiting
until tomorrow to do the task.
In Figure \ref{fig:two-fan}, representing the 3-period example,
the direct edge from $s$ to $t$ again corresponds to performing the task
immediately, while moving to $v_1$ corresponds to waiting
until (at least) Friday, and continuing on to $v_2$ corresponds to
waiting until Saturday.
The reader can check that optimal, naive, and sophisticated agents
solve their respective path-planning problems on these two graphs
as they do in the examples.
For example, the sophisticated agent in Figure \ref{fig:one-fan} will
move to $v_1$, since from $s$ it perceives this path as costing $c < b$;
but the sophisticated agent in Figure \ref{fig:two-fan} will reason
that its $v_1$-self will move to $v_2$, for an eventual cost that it
perceives as $c^2 > b$, and so its $s$-self will move directly to $t$.

\xhdr{Overview of Results: Cost Ratios}
We are thus bringing together two formalisms, integrating the notion of
sophisticated behavior in present-biased agents
\cite{odonoghue-now-or-later,odonoghue-choice-procrastination},
into the graph-theoretic model of present-biased planning \cite{ko-ec14}.
This synthesis turns out to uncover a rich structure that wasn't
apparent in the long line of earlier behavioral economics work,
including a range of findings that shed new light on sophisticated behavior.

We first consider the question of worst-case guarantees for 
sophisticated agents.
The 2-period example (rendered in graph-theoretic terms in 
Figure \ref{fig:intro-ex01}) shows that when $c$ approaches $b$ from below,
the sophisticated agent is doing a factor of $b$ worse than optimal.
Let us call this the {\em cost ratio}: the ratio of the biased agent's
cost to the optimal cost.
Can we find examples with cost ratios that are 
worse than in this simple structure?

In fact, we can't: in every graph instance, the cost incurred by
the sophisticated agent is at most a factor of $b$ worse than 
the optimal cost.
This tight bound is independent of the size of the graph;
given that the cost incurred by a naive agent can be exponentially worse
in the size of the graph \cite{ko-ec14}, this is a first striking 
indication of the power of sophistication in general instances.

We then consider the same question for a common generalization
of naivete and sophistication --- the notion of 
{\em partial naivete} introduced in O'Donoghue and Rabin's work
\shortcite{odonoghue-choice-procrastination}.
A partially naive agent is aware of its present bias, but might
be wrong about the value of its parameter $b$; in particular, 
it may believe its bias parameter to be $b' \neq b$, and to
plan based on the assumption that the parameter is $b'$.
Thus, a sophisticated agent has $b' = b$, (because it knows its true bias)
while a naive agent has $b' = 1$ (because it believes that it has no bias).
How does the worst-case performance of a partially naive agent
vary in the value of $b'$?\footnote{O'Donoghue and Rabin
consider only the case of $b' \leq b$, when agents underestimate
their present bias, as naive agents do in an extreme sense.
But the case of $b' > b$ is interesting as well, and forms a 
powerful contrast as we will see.}

We show that there is a surprisingly
sharp transition in performance as we vary $b'$, with 
the fully sophisticated case $b' = b$ 
forming the boundary of the transition.
Specifically, agents who are {\em pessimistic},
believing that they overweight costs by a factor of $b' > b$,
have a worst-case cost ratio of at most $b'$; but
{\em optimistic} agents, who believe that 
they overweight costs by a factor less than $b$,
can perform as badly as naive agents, with an
exponential cost ratio.

\xhdr{Overview of Results: Rewards and Commitment Devices}
We next consider extensions of the basic model that incorporate
rewards in addition to costs, again finding new phenomena
in sophisticated behavior.

First, we consider the case in which the agent is not required
to reach the target node $t$; instead, there is a reward $R$ at $t$, and
the agent collects the reward only if it reaches $t$.
The agent thus faces the question both of {\em whether} to attempt to 
reach $t$, and if so then which path to follow.
Naive agents will sometimes start on a path to $t$ and then
give up partway through the path, but a sophisticated agent
will never do this; because it correctly predicts its future 
decision-making, it is able to determine ahead of time whether
it will be able to get from $s$ to $t$ or not.

Despite this, it turns out to be 
quite complex to understand the set of 
{\em feasible rewards} that will cause a sophisticated
agent to travel from $s$ to $t$.
In particular, we discover a novel Braess-Paradox-like phenomenon: 
there are instances in which the 
agent will decide to traverse the graph for a reward $R$, but will 
decide not to traverse the graph for a certain larger reward $R' > R$.
This surprising but direct consequence of sophistication does not
appear to have analogues in the prior literature on present bias.
The intuition behind this phenomenon
is that increasing the reward can make parts of the graph
traversable to one of the agent's future selves that 
previously were not traversable, and this can lead an earlier self 
to perceive a higher cost.
This non-monotonicity implies that the set of
feasible rewards does not necessarily form a connected set.
In fact, we show how to construct instances for which
the set of feasible rewards consists of exponentially many
disjoint intervals.

We also consider the ``pure reward'' case in which an agent
is seeking to collect rewards, rather than to minimize costs,
throughout its traversal of the graph.
This is easily modeled in our framework by simply having 
the edge weights in our directed acyclic graph $G$ correspond
to rewards rather than costs; as before, the agent seeks a path in $G$
from $s$ to $t$, but the goal is now to traverse as much edge weight
as possible.  O'Donoghue and Rabin 
\shortcite{odonoghue-now-or-later,odonoghue-choice-procrastination},
had observed that there are
cases where sophisticated agents tend to do badly at collecting rewards
in particular
and we find this effect borne out and generalized
to arbitrary instances in the graph-theoretic model:
there are instances on which a sophisticated agent will collect
an exponentially small fraction of the optimal reward.
Interestingly, we also show that naive agents always collect
at least a $1/b$ fraction of the optimal reward, a striking
reversal of the relative worst-case 
performance between sophistication and naivete
when the goal is gaining reward rather than minimizing cost.

Given that sophisticated agents can have very bad worst-case
behavior in this reward-seeking case, it is a natural setting
to model the power of different kinds of {\em commitment devices}
for improving their outcomes.
Commitment devices are familiar from both laboratory settings
and everyday experience as mechanisms to intentionally limit
future options and thus cut down opportunities for sub-optimal 
behavior \cite{brocas2004commitment}.
(A common example is via commitments to physical fitness program
\cite{dellavigna2006paying}).
Given our current model, we can ask how powerful such commitment devices
may be in reducing the effects of present bias with sophistication, and
how they can best be constructed.

In our setting, we can model commitment devices through 
operations that represent a controlled ability to modify the graph
instance at the outset.
We show that such commitment devices have a powerful ability
to reduce the worst-case in reward-seeking instances, 
through results showing roughly that every
instance has a small modification that reduces the exponential worst-case
to something much smaller.
Perhaps most interesting among these is the fact that 
in any reward-seeking instance,
and for any $\eps > 0$, it is possible to remove at most an
$\eps$ fraction of the edges and bring the sophisticated agent's
performance on the modified graph to within a factor of
$\Omega(n^{-\eps/2})$ of optimal --- an exponential
improvement in the general worst-case guarantee compared
to the performance on the unmodified graph.

%\section{Model and Preliminaries}
%\input{model}

\section{Behavior of Sophisticated Agents and the Cost Ratio}
%\subsection{Sophisticated Agents}

A sophisticated agent plans similarly to a naive agent with present
bias, but the sophisticated agent takes into account the fact that its
future selves will also exhibit present bias. This means that in order
to compute the path that it plans to take, it first computes the path
that its future self will take starting from each subsequent node
in the topological ordering of the underlying directed acyclic graph $G$. 
Formally, we can use the following
recursive formula to define 
the cost $C_s(u)$ of the path taken by a sophisticated agent with present-bias
parameter $b$, starting from a node $u$:
\begin{align*} %\label{eq:successor}
  C_s(u) = c(u, S_s(u)) + C_s(S_s(u)) \text{ where } S_s(u) = \argmin_{v : (u,v)
  \in E } ~ b \cdot c(u, v) + C_s(v).
\end{align*}

Note that the path that the agent will take can be computed in
polynomial time as the graph is a DAG and $C_s(u)$ only depends on
vertices that come after $u$ in the topological ordering of $G$.

To get a better understanding for the complexity of a sophisticated
agent's behavior, it is useful to go over an example
that illustrate some non-obvious consequences of sophistication.
In particular, consider the graph
in Figure~\ref{fig:change}, 
where we will show how
an agent with present bias $b = 1$ or $b = 10$ will
traverse the same path, but an agent with a present bias $b = 2$
strictly between them will traverse a different path.
We note that this is a contrast with naive agents, where
the set of all $b$ for which an agent traverses traverses a given path $P$
forms an interval \cite{ko-ec14}.

To understand where this non-monotonic phenomenon comes from, 
we observe that for $b = 1$, the
agent acts optimally and takes the $s \to u \to t$ path. However, 
for other values of $b$, we have the following:
\begin{itemize}
\item 
When $b = 2$,
the sophisticated agent creates a different plan by reasoning about its
future selves and their perceived costs as follows.
It finds that at $v$, the perceived
cost of going from $v$ to $t$ is $b \cdot 5 = 10$, while the perceived cost of
taking the $v \to w \to t$ path is $b \cdot 0 + 49 = 49$. 
Thus, with $b = 2$, the agent standing at $v$
would go directly from $v$ to $t$. 
The sophisticated agent knows this while standing at $s$, so 
when computing a path from $s$,
it calculates that the perceived cost of the $s \to u \to t$ 
path is $b \cdot 3 + 2 = 8$, while the perceived cost of the 
$s \to v \to t$ path is $b \cdot 1 + 5 = 7$.
This means that it chooses the $s \to v \to t$ path. 
\item 
When $b = 10$, 
the agent does something different --- from $v$, its perceived cost of
taking the $v \to t$ path is $b \cdot 5 = 50$, while its perceived cost of
taking the $v \to w \to t$ path is $b \cdot 0 + 49 = 49$. Thus, if it reached
$v$, it would take the $v \to w \to t$ path. Knowing this, when choosing a path
from $s$, its only options are $s \to u \to t$, 
which has perceived cost $b \cdot 3 + 2 = 32$, 
and $s \to v \to w \to t$, which has perceived cost 
$b \cdot 1 + 49 = 59$. 
Thus, it chooses to follow the $s \to u \to t$ path ---
as the optimal agent did, but for different reasons.
\end{itemize}

    \figureswraper{
    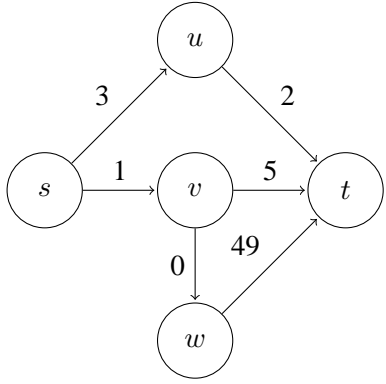
\begin{figure}[t]
      \centering
      \begin{tikzpicture}[->,shorten >=1pt,auto,node distance=2cm, thin]
        \node (0) [circle, draw, minimum size=1cm] at (0,0) {$s$};
        \node (1) [circle, draw, minimum size=1cm] at (2,2) {$u$};
        \node (2) [circle, draw, minimum size=1cm] at (2,0) {$v$};
        \node (3) [circle, draw, minimum size=1cm] at (2,-2) {$w$};
        \node (4) [circle, draw, minimum size=1cm] at (4,0) {$t$};

        \path
        (0) edge node [above left] {3} (1)
        (0) edge node [above] {1} (2)
        (1) edge node [above right] {2} (4)
        (2) edge node [left] {0} (3)
        (2) edge node [above] {5} (4)
        (3) edge node [above left] {49} (4)
        ;
      \end{tikzpicture}
      \caption{Graph $G$ for which the paths for $b = 1$ and $b = 10$ are
        the same, but different from the path for $b = 2$ \label{fig:change}}
    \end{figure}
    } %end figure wrapper

\subsection{General Bounds on the Cost Ratio}
A first natural question to ask is to compare the performance of sophisticated
agents relative to optimal ones (who will simply take the shortest path). In the
introduction we have seen an example (Figure \ref{fig:one-fan}) in which the
cost ratio between the cost exhibited by a sophisticated agent and the shortest
path can be $b$. Here we show that this cost ratio is indeed the worst
attainable. This is in contrast to the cost ratio for naive agents,
which in the worst case can be
exponential in the graph size.

%\begin{theorem} \label{thm:full}
%The cost of the path taken by a sophisticated agent is at most a factor of $b$ more than the optimal path.
%\end{theorem}
\begin{theorem} \label{thm:full}
The cost ratio for a sophisticated agent is at most $b$.
\end{theorem}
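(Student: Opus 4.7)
The plan is to prove, by induction on the reverse topological order of $G$, the stronger statement that for every node $u$, the sophisticated agent's realized cost satisfies $C_s(u) \le b \cdot \mathrm{OPT}(u)$, where $\mathrm{OPT}(u)$ denotes the cost of a shortest $u$-to-$t$ path. Applying this at $u = s$ gives the theorem. Strengthening the statement to every node is the key move that enables a clean induction, since the sophisticated agent's recursive definition is itself a per-node quantity.

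The base case is immediate: $C_s(t) = 0 = b \cdot \mathrm{OPT}(t)$. For the inductive step at a node $u$, let $v = S_s(u)$ be the node the agent actually moves to, and let $w$ be the first node on some fixed shortest $u$-to-$t$ path, so that $\mathrm{OPT}(u) = c(u,w) + \mathrm{OPT}(w)$. By the inductive hypothesis applied to $w$ (which is later in the topological order than $u$), we have $C_s(w) \le b \cdot \mathrm{OPT}(w)$. By the defining optimality of $v$, we have $b \cdot c(u,v) + C_s(v) \le b \cdot c(u,w) + C_s(w)$. Since $b \ge 1$ and all edge costs are non-negative, $c(u,v) \le b \cdot c(u,v)$, so we can chain these inequalities: $C_s(u) = c(u,v) + C_s(v) \le b\cdot c(u,v) + C_s(v) \le b \cdot c(u,w) + C_s(w) \le b \cdot c(u,w) + b \cdot \mathrm{OPT}(w) = b \cdot \mathrm{OPT}(u)$.

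There is no real obstacle here, which is itself notable given that the analogous bound for naive agents is exponential. The only subtle point is recognizing that the natural-looking direct bound $C_s(s) \le b \cdot \mathrm{OPT}(s)$ cannot be proved by comparing whole paths --- one has to localize the argument to every node so that the argmin defining $S_s(u)$ can be used edge by edge. The factor-of-$b$ slack that appears when passing from the perceived cost $b \cdot c(u,v) + C_s(v)$ to the true cost $c(u,v) + C_s(v)$ is exactly what the optimal-path comparison absorbs, and it is tight, matching the lower bound exhibited by Figure~\ref{fig:one-fan} when $c \to b$.
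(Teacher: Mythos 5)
Your proof is correct and follows essentially the same argument as the paper's: induction over the DAG (the paper uses node height rather than reverse topological order, which is immaterial) establishing $C_s(u) \le b \cdot C_o(u)$ at every node, with the same chain of inequalities using the argmin defining $S_s(u)$ and the slack $c(u,v) \le b \cdot c(u,v)$.
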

\begin{proof}
Denote the cost of the optimal (i.e. min-cost) path from $u$ by $C_o(u)$ and the node that follows $u$ on the optimal path by $S_o(u)$. We use induction on the height of a node, defined as the number of edges in the longest 
unweighted path from the node to $t$.

\textbf{Inductive hypothesis:} For a node $u$ with height $k$, $C_s(u) \le
  b \cdot C_o(u)$.

\textbf{Base case:} For a node $u$ of height 1, $C_s(u) = C_o(u) \le b
\cdot C_o(u)$, as the sophisticated agent will take the optimal
path when there are no paths to the goal longer than one step.
 
\textbf{Inductive step:} For a node $u$ with height $k+1$, let $v_s = S_s(u)$
    and let $v_o = S_o(u)$. By definition, if the sophisticated agent chose to go to $v_s$ from $u$ instead of $v_o$, that means that $b \cdot c(u,v_s) + C_s(v_s) \le b \cdot c(u,v_o) + C_s(v_o)$. Since $b > 1$ and by definition $C_s(u) = c(u,v_s) + C_s(v_s)$, this implies that $C_s(u) \le b \cdot c(u,v_o) + C_s(v_o)$. Recall that by the induction hypothesis we have that $C_s(v_o) \le b \cdot C_o(v_o)$. Thus we have that $C_s(u) \le b \cdot c(u,v_o) + b \cdot C_o(v_o) = b \cdot C_o(u)$.
\end{proof}
In the introduction we already observed that there are instances in which a
sophisticated agent is doing significantly better than a naive agent, and now we
have a strong worst-case guarantee for the cost ratio for a sophisticated agent.
Given this, one might suspect that sophisticated agents are always better off
than naive ones in every instance.
In fact, this is not the case; we now
show that there are instances in which a naive agent
can do better than a sophisticated one by a factor arbitrarily close to $b$.
(Note that that by Theorem~\ref{thm:full}, it cannot exceed $b$.)

\begin{claim}
There are instances on which the cost incurred by a sophisticated agent
can exceed the cost incurred by a naive agent by a factor arbitrarily
close to $b$.
\end{claim}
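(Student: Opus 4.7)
The plan is to exhibit a small graph in which the naive agent, steered by its optimistic misprediction of the future, happens to follow a nearly optimal path, while the sophisticated agent, correctly foreseeing how its future selves will squander a ``decoy'' shortcut, deliberately commits to a longer $1$-fan whose actual cost is close to $b$ times the naive cost. The mechanism is to attach to $s$ two options: a standard $1$-fan $s \to v_1 \to t$ of effective cost $c$, and a ``decoy'' branch $s \to x \to \cdots \to t$ that contains an embedded $1$-fan at $x$. The embedded gadget is designed so that $C_s(x)$ exceeds $C_o(x)$, and this sophistication gap is the sole source of disagreement between the two agents at $s$; tuning the cost of the edge $s \to x$ then places them on opposite sides of an indifference threshold.

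Concretely, I take the graph on vertices $s, v_1, x, x', t$ with edges $s \to v_1$ of cost $0$, $v_1 \to t$ of cost $c$ (any $1 < c < b$), $s \to x$ of cost $\delta$, $x \to t$ of cost $\eps$, $x \to x'$ of cost $0$, and $x' \to t$ of cost $c\eps$, with $\eps > 0$ small and $\delta$ chosen in the window below. The first step is to analyze the gadget at $x$: since $x'$ has a unique continuation, both agents at $x$ compare the perceived cost $b\eps$ of $x \to t$ against $c\eps$ of $x \to x' \to t$, and because $c < b$ both take the latter. Hence the sophisticated cost and the naive cost from $x$ are both $c\eps$, while $C_o(x) = \eps$, producing a gap of $(c-1)\eps$ at $x$ that is localized there.

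Next I would choose $\delta$ in the open interval $\bigl((c - c\eps)/b,\, (c - \eps)/b\bigr)$, which is non-empty precisely because $c > 1$. In this window, the naive agent at $s$ compares perceived cost $c$ via $v_1$ against $b\delta + C_o(x) = b\delta + \eps < c$ via $x$, so it routes through $x$ and incurs actual cost $\delta + c\eps$; the sophisticated agent compares $c$ via $v_1$ against $b\delta + C_s(x) = b\delta + c\eps > c$ via $x$, so it routes through $v_1$ and incurs actual cost $c$. Driving $\delta$ to the lower endpoint of the interval makes the naive cost approach $c/b + c\eps(b-1)/b$, so the ratio sophisticated-to-naive approaches $b/\bigl(1 + (b-1)\eps\bigr)$, which can be made arbitrarily close to $b$ by taking $\eps$ small (for any fixed $c \in (1,b)$).

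The main care-point is keeping the feasibility window for $\delta$ strict as $\eps$ shrinks, so that neither of the two inequalities $b\delta + \eps < c$ and $b\delta + c\eps > c$ collapses into a tie at which the two agents' decisions could coincide; this delicate parameter coupling is what makes the divergence robust. The limit value $b$ itself is not attained, only approached, which matches the phrasing ``arbitrarily close to $b$'' and is consistent with Theorem~\ref{thm:full}: sophisticated cost is at most $b \cdot C_o(s)$ while naive cost is at least $C_o(s)$, so the ratio is strictly below $b$ on any finite instance.
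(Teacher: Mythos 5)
Your construction is correct: the embedded fan at $x$ makes $C_s(x)=c\eps$ while the naive agent predicts $C_o(x)=\eps$, the window for $\delta$ is nonempty since $c>1$, and the double limit ($\delta$ to the lower endpoint, then $\eps\to 0$) drives the ratio to $b$. This is essentially the paper's own approach (Figure~\ref{fig:sophnaive}): a decoy branch containing a small two-period fan that splits the naive and sophisticated perceived continuations, paired with an alternative route calibrated just inside the sophisticated agent's indifference threshold so that it pays roughly $b$ times the naive agent's cost; the paper takes the first-edge cost to infinity where you instead shrink $\eps$, but the mechanism is the same.
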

\begin{proof}
Figure \ref{fig:sophnaive} shows how to construct such an instance.
In the figure, we will think of $x$ as very large relative to $b$,
and $\eps$ as very close to $0$; at the end, we will let these parameters
go toward their limits.

At the
beginning, the naive agent perceives that the cost of the path $s \to u \to t$
is $bx + 1$ and the perceived cost of the path $s \to w \to t$ is $bx + b -
2\varepsilon$, so it chooses to go to $u$. Once at $u$, it chooses the path $u
\to v \to t$ instead, for a total incurred cost of $x + b - \varepsilon$. The
sophisticated agent, however, understands that once it reaches $u$, it will
proceed to $v$ instead of $t$. Therefore, its perceived cost of a path going
through $u$ is $bx + b - \varepsilon$, while the perceived cost of going through
$w$ is $bx + b - 2\varepsilon$. Therefore, it takes the path $s \to w \to t$ for
a total cost of $bx + b - 2\varepsilon$. The ratio of the sophisticated
agent's cost to the naive agent's cost is therefore $(bx + b -
2\varepsilon) / (x + b - \varepsilon)$. Hence, in the limit as $x$ approaches
infinity this ratio approaches $b$.
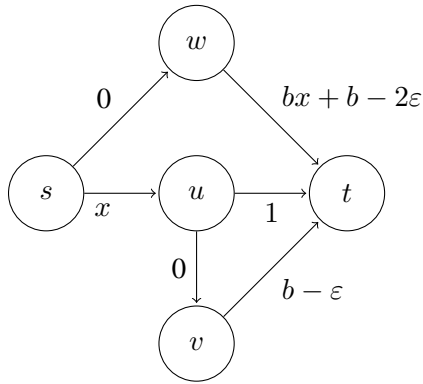
\begin{figure}[ht]
  \centering
  \begin{tikzpicture}[->,shorten >=1pt,auto,node distance=2cm, thin]
    \node (s) [circle, draw, minimum size=1cm] at (0,0) {$s$};
    \node (u) [circle, draw, minimum size=1cm] at (2,0) {$u$};
    \node (v) [circle, draw, minimum size=1cm] at (2,-2) {$v$};
    \node (w) [circle, draw, minimum size=1cm] at (2,2) {$w$};
    \node (t) [circle, draw, minimum size=1cm] at (4,0) {$t$};

    \path
    (s) edge node [below left] {$x$} (u)
    (s) edge node {0} (w)
    (u) edge node [below] {$1$} (t)
    (u) edge node [left] {0} (v)
    (v) edge node [below right] {$b - \varepsilon$} (t)
    (w) edge node {$bx + b - 2\varepsilon$} (t)
    ;
  \end{tikzpicture}
  \caption{Worst-case ratio for the cost of a sophisticated agent relative
to a naive agent}
  \label{fig:sophnaive}
\end{figure}
\end{proof}

\yhdr{Partially Naive Agents}
Next, we observe that quantifying the cost ratio is interesting not
only for sophisticated agents but also for agents that are \emph{partially
naive}. Following O'Donoghue and Rabin
\cite{odonoghue-choice-procrastination}, these agents are characterized
by two parameters: they have a true present-bias parameter of $b$,
but they act based on a belief that their present-bias parameter is 
$b'$.
For sophisticated agents we have that $b'=b$, while for partially naive
agents we have $b' \neq b$. DellaVigna \cite{dellavigna2007psychology} 
summarizes several lines 
of evidence that most people are partially naive, in the sense that
they are optimistic about their present bias ($b'<b$). In Appendix
\ref{app:partial-naive} we show that, in fact, 
optimistic agents do not outperform naive agents in the worst case,
and they can have exponential cost ratio.
The situation is very different for 
partially naive agents who are pessimistic
($b'>b$); their cost ratio is bounded by $b'$.

Thus, there is an exponential transition in the worst-case cost ratio
as we vary the partial naivete parameter $b'$, with the 
case of sophistication $b' = b$ forming the boundary of the transition.

\section{Motivating an Agent to Reach the Goal} \label{sec:motivating}
We now adapt the model to consider cases in which a sophisticated agent
must decide
whether or not the task is worth completing.
We place a reward $R$ at the goal vertex $t$, and
the agent will only choose to continue towards the goal if the reward at $t$
is at least as large as the perceived cost of reaching $t$. 
At each node, the agent
chooses its path as before, but it also has the ability to
{\em abandon} the traversal toward $t$ at any point
if the perceived cost is higher than the reward.
Furthermore, the agent engages in sophisticated reasoning as before,
and so it can determine for each future node $v$ in $G$ whether
or not it would abandon its traversal if it were at $v$,
and it will avoid plans that take it to nodes at which it would
abandon the traversal.

We can perform this reasoning to identify the nodes where the traversal
of $G$ will be abandoned, working backward over a topological ordering of $G$.
We give the details via an algorithm in the proof of the following claim.

\begin{claim}
Given a graph $G$ and a reward $R$ placed on $t$, we can determine in linear time whether or not the agent will reach $t$, and if so, which path it will take.
\label{claim:reach-t-fixed-reward}
\end{claim}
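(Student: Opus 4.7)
The plan is to compute, for each vertex $v$ of $G$ in reverse topological order, two pieces of information: a boolean $\text{reach}(v)$ indicating whether a sophisticated self placed at $v$ would eventually reach $t$ and collect the reward, and, when $\text{reach}(v)$ is true, the actual incurred cost $C_s(v)$ along with the next-step choice $S_s(v)$. The intuition for why a single backward pass suffices is that a sophisticated self at $v$ correctly anticipates the behavior of its future selves, and those future selves live entirely at vertices that come later in the topological order, so their decisions have already been computed by the time we process $v$.

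For the update, first set $\text{reach}(t) = \text{true}$ and $C_s(t) = 0$. For a vertex $v \neq t$, restrict attention to the set $N(v) = \{w : (v,w) \in E \text{ and } \text{reach}(w) = \text{true}\}$. The point is that going to a successor $w$ with $\text{reach}(w) = \text{false}$ is strictly dominated by abandoning at $v$: the agent would pay $c(v,w) > 0$ but still end up with no reward, since its $w$-self would abandon. So only edges into $N(v)$ are live options. Among these, the agent at $v$ picks $S_s(v) = \arg\min_{w \in N(v)} \bigl(b \cdot c(v,w) + C_s(w)\bigr)$, exactly by the recursive formula for sophisticated behavior. Writing $P(v)$ for this minimum perceived continuation cost, we then compare $P(v)$ to $R$: if $N(v) \neq \emptyset$ and $P(v) \leq R$, set $\text{reach}(v) = \text{true}$ and $C_s(v) = c(v, S_s(v)) + C_s(S_s(v))$; otherwise set $\text{reach}(v) = \text{false}$. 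At the end, the agent reaches $t$ iff $\text{reach}(s) = \text{true}$, and the chosen path is recovered by following pointers $s, S_s(s), S_s(S_s(s)), \ldots, t$.

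Correctness follows by induction on the topological position of $v$: assuming $\text{reach}$ and $C_s$ are correct for all successors of $v$, the agent at $v$ is choosing between abandoning (net perceived value $0$) and, for each $w$ with $\text{reach}(w) = \text{true}$, continuing to $w$ for net perceived value $R - b \cdot c(v,w) - C_s(w)$; edges to $w$ with $\text{reach}(w) = \text{false}$ yield perceived value $-b \cdot c(v,w) < 0$ and are never chosen. This is exactly the decision rule we implement. For the running time, each vertex is processed once, and at each vertex we examine only its out-edges; combined with an $O(|V| + |E|)$ topological sort, the total time is $O(|V| + |E|)$, i.e.\ linear.

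The main subtlety I would want to flag is the dominance argument that justifies ignoring edges to unreachable successors: one must verify that a sophisticated self at $v$ is indifferent between abandoning at $v$ and stepping into a node where its future self abandons, and strictly prefers abandoning at $v$ once the traversed edge has positive cost. This keeps the recursion clean and ensures that the computed $C_s(v)$ truly equals the realized cost along the executed path, rather than the perceived cost of some never-realized continuation.
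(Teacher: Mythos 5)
Your proposal is correct and follows essentially the same approach as the paper: a single reverse-topological pass that prunes abandoned successors and otherwise applies the standard sophisticated recursion, which is just the paper's two-phase prune-then-traverse argument merged into one sweep. The only nitpick is your claim that stepping to an unreachable successor is \emph{strictly} dominated because $c(v,w)>0$ --- the model allows zero-cost edges, so the agent may be indifferent there, but the model already stipulates that the agent avoids nodes at which it would abandon, so the conclusion stands.
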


\begin{proof}
Finding the path an agent takes
through $G$ can be done by a two-phase process.  In the first phase,
we iteratively determine, for each node $v$, whether or not the agent
would abandon the traversal starting from $v$.
We prune all the nodes for which the agent would abandon the traversal.
Second, in
the pruned graph, $G'$, the agent will take the same path as agent
would take in $G'$ without considering a reward. As any vertex/edge
not pruned by this process can be freely used by the agent, the agent will
choose between them as it does in the absence of a reward.

To prune the graph, we iterate over the vertices
in reverse topological order. For each vertex $v$ and each edge $e$
originating from it we can compute the perceived cost of a path that
the agent will take; since the subgraph starting from $v$ is already
pruned this path will be simply the path that an agent will traverse
without rewards. If this perceived cost is greater than $R$ then the
edge is pruned. If all the outgoing edges of a vertex are pruned, we
prune the vertex as well.
\end{proof}

In this section we consider two types of questions regarding the model
with a reward at the target node $t$.
In Section \ref{sec-traversable} we ask, given a
graph $G$, which rewards will motivate the agent to reach $t$,
and how different rewards change both the agent's decision to traverse
the graph and the path the agent follows.
In particular, we show that the minimum reward required for motivating
a sophisticated agent to reach $t$ is 
at most $b$ times the minimum cost of a path in the absence of a reward;
and moreover, the number of different paths 
the agent takes as we vary the reward $R$ can be exponential. 
In Section
\ref{sec-modifying} we consider further modifications to the graph 
as a means of motivating the agent to reach $t$.
Specifically, we consider two methods for
modifying the graph: (i) deleting edges, where we show that a minimal
subgraph motivating the agent to reach $t$ must be
a path; and (ii) adding internal rewards on the
edges, which we show is less powerful than edge deletion by a factor of at
least $2$. We conclude this section by some open questions about rewards.

\subsection{Traversable Rewards} \label{sec-traversable}

\def\Rmin{R^{\rm min}}

For the remainder of this section, we let $\Rmin$ denote the
minimum reward needed to motivate a sophisticated agent to reach $t$
in the graph $G$.
Before turning to our results, we note the example
in Figure \ref{fig:followp}, which shows 
that the reward $\Rmin$ might motivate an agent to take 
a path other than the optimal one. The optimal path, $s \to u \to t$, would
require a reward of 6 to motivate the agent to follow it in isolation. However,
the full graph is not traversable for a reward of 6, since at $u$, the agent
would prefer to follow the $u \to v \to t$ path, which has perceived cost 3.
Note that the agent would not continue from $w$ for a reward of 6, as the
perceived cost of going from $w$ to $t$ is 10. For a reward of 7, however, the
agent would be willing to follow the $s \to u \to v \to t$ path, and thus,
$\Rmin = 7$. For a reward of 10, the agent is now willing to go from $w$ to $t$.
From $v$, it would choose the $v \to w \to t$ path over the $v \to t$ path since
it has a lower perceived cost. As a result, from $u$, the perceived cost of
going to $v$ is 5 (because it knows that from $v$ it would proceed to $w$),
while the perceived cost of going direclty to $t$ is $b \cdot 2 = 4$. This means
that the agent follows the $s \to u \to t$ path.

\figureswraper{
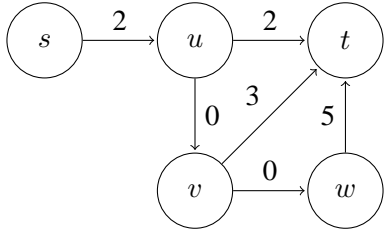
\begin{figure}[t]
  \centering
  \begin{tikzpicture}[->,shorten >=1pt,auto,node distance=2cm, thin]
    \node (s) [circle, draw, minimum size=1cm] at (-2,0) {$s$};
    \node (u) [circle, draw, minimum size=1cm] at (0,0) {$u$};
    \node (v) [circle, draw, minimum size=1cm] at (0,-2) {$v$};
    \node (w) [circle, draw, minimum size=1cm] at (2,-2) {$w$};
    \node (t) [circle, draw, minimum size=1cm] at (2,0) {$t$};

    \path
    (s) edge node {2} (u)
    (u) edge node {2} (t)
    (u) edge node {0} (v)
    (v) edge node {3} (t)
    (v) edge node {0} (w)
    (w) edge node {5} (t)
    ;
  \end{tikzpicture}
  \caption{$\Rmin$ may be larger than the minimum reward required to traverse
    any path in isolation. \label{fig:followp}}
\end{figure}
}

Even though the minimum reward required to motivate the agent to
traverse the graph can be greater than the reward required to motivate
the agent to follow the optimal path in isolation, we show that this
minimum reward is at most $b$ times the cost of the optimal path. 
Specifically, we prove the following result \EC{in 
Appendix~\ref{app:rewardratio}}:

%Next, we bound the minimum reward required to motivate the agent to traverse
%$G$. In Appendix~\ref{app:rewardratio} we show that:
\begin{theorem} \label{thm:rewardratio}
  Let $R$ be the minimal reward required to motivate a sophisticated agent to
  reach $t$. Then, $R \le b \cdot C_o(s)$
\end{theorem}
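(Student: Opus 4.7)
The plan is to prove, by reverse induction on topological order (equivalently, induction on height), the following strengthened statement: for every node $u$ with $b \cdot C_o(u) \le R$, the sophisticated agent at $u$ with reward $R$ both reaches $t$ and incurs total cost at most $b \cdot C_o(u)$. Applying this at $u = s$ with $R := b \cdot C_o(s)$ immediately yields $\Rmin \le b \cdot C_o(s)$.

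The base case $u = t$ is trivial. For the inductive step at a non-terminal node $u$ with $b \cdot C_o(u) \le R$, let $v^* := S_o(u)$ denote $u$'s successor on the optimal path. From $c(u, v^*) + C_o(v^*) = C_o(u)$ we get $b \cdot C_o(v^*) \le b \cdot C_o(u) \le R$, so the inductive hypothesis applies at $v^*$: the sophisticated agent at $v^*$ reaches $t$ and incurs cost at most $b \cdot C_o(v^*)$. This makes stepping to $v^*$ a legal, non-abandoning option at $u$, whose perceived cost is at most $b \cdot c(u, v^*) + b \cdot C_o(v^*) = b \cdot C_o(u) \le R$. Hence the agent at $u$ does not abandon; letting $v_s$ be the perceived-cost minimizer among its legal options, the minimality inequality combined with the inductive cost bound at $v^*$ gives $b \cdot c(u, v_s) + \text{(cost incurred from } v_s\text{)} \le b \cdot c(u, v^*) + b \cdot C_o(v^*) = b \cdot C_o(u)$. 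Dropping the factor of $b$ from the first term (valid since $b \ge 1$) then shows that the cost incurred from $u$ is at most $b \cdot C_o(u)$, exactly mirroring the final inequality in the proof of Theorem~\ref{thm:full}.

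The main obstacle is handling abandonment correctly in the inductive step: unlike in the setting of Theorem~\ref{thm:full}, in the reward setting a successor $v$ of $u$ is a usable option only if the future self at $v$ does not itself abandon. The strengthened inductive hypothesis is designed precisely for this: it simultaneously guarantees that $v^* = S_o(u)$ is a permissible successor (so $u$ has at least one legal option and does not abandon) and that the $b \cdot C_o$ cost bound carries over to $v^*$ (so the perceived-cost calculation at $u$ mirrors Theorem~\ref{thm:full} and closes the induction). Tracking only reachability, or only the cost bound, would not be enough to carry the inductive step through.
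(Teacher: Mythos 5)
Your proof is correct and follows essentially the same route as the paper's: induction on height (reverse topological order) showing that a reward of $b \cdot C_o(u)$ keeps $u$ from being abandoned, by exhibiting the optimal successor $S_o(u)$ as a legal option with perceived cost at most $b \cdot c(u,S_o(u)) + b \cdot C_o(S_o(u)) = b \cdot C_o(u) \le R$. The only difference is one of packaging: you carry the cost bound $C_s(u) \le b \cdot C_o(u)$ inside the strengthened induction hypothesis, whereas the paper keeps the induction purely about non-abandonment and instead invokes Theorem~\ref{thm:full} on the pruned graph (which requires the additional observation that the entire optimal path from $v$ to $t$ survives pruning); your version handles that point slightly more self-containedly.
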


\arxiv{
\begin{proof}
  First, we observe that finding a path when there is a reward is equivalent to
  removing all vertices and edges from $G$ that would be abandoned by the
  agent, and then finding a path in this subgraph $G'$. This is because the
  agent ignores any vertices or edges that would lead to abandonment, but it
  still follows the same formula of choosing the successor vertex that minimizes
  the perceived cost.

  We proceed by induction on the height of the vertex, proving that at any
  vertex $u$, if we place a reward of at least $b \cdot C_o(u)$ on $t$, $u$ will
  never be abandoned.

  \textbf{Base case:} For a vertex of height 1 with some cost $x$ to get to $t$,
  placing a reward of at least $bx$ will always motivate the agent to reach $t$,
  and thus the vertex will never be abandoned.
  
  \textbf{Inductive hypothesis:} For a vertex $v$ of height $k$, a reward of $b
  \cdot C_o(v)$ placed on $t$ will never lead to $v$ being abandoned.

  \textbf{Inductive step:} Let $v$ be the next vertex on the optimal path from
  $u$ to $t$, where $u$ has height $k+1$ and $v$ has height $\le k$. Consider
  placing a reward $R_u \ge b \cdot C_o(u)$ on $t$. For such a reward, by
  induction, $v$ will not be abandoned. Let $x = c(u,v)$. We know that
  \begin{align*}
    R_u &\ge b \cdot C_o(u) \\
        &= bx + b \cdot C_o(v) \\
        &\ge bx + C_s(v) \tag{by Theorem~\ref{thm:full}}
  \end{align*}
  The last step holds because by induction, no vertex on the optimal path from
  $v$ to $t$ would be abandoned for a reward of $R_u$. Choosing the path the
  agent would take from $v$ is equivalent to pruning the abandoned nodes from
  $G$ and then choosing a path in the pruned $G'$ as if there is no reward.
  Since $G'$ contains the optimal path from $v$ to $t$ and $C_s(v)$ is the cost
  of the path taken by the agent in $G'$, we can use Theorem~\ref{thm:full} to get
  $C_s(v) \le b \cdot C_o(v)$.

  Since $R_u \ge bx + C_s(v)$, the reward is more than the perceived cost of
  going from $u$ to $t$ through $v$, meaning there is at least one edge that the
  agent is willing to take leading out of $u$. Thus, $u$ will never be abandoned
  for $R_u \ge b \cdot C_o(u)$.

  By induction, we see that $s$ is never abandoned for a reward of at least $b
  \cdot C_o(s)$, meaning that for such a reward, there is a valid path from $s$
  to $t$ in the pruned $G'$. Therefore, for any reward $b \cdot C_o(s)$ or
  higher, $G$ is traversable, meaning the minimum reward for which $G$ is
  traversable is at most $b \cdot C_o(s)$.
\end{proof}

\arxiv{
It is easy to see that the bound in Theorem \ref{thm:rewardratio} is tight by considering a graph which is
only a single edge with a cost $c$. For such a graph $R=b \cdot c$ and $C_o(s) =
c$. Furthermore, let $R_d$ be the reward required to traverse any path in $G$ in
isolation. Since $R_d \ge C_o(s)$, we have that $R \le b \cdot R_d$.
Figure~\ref{fig:rtight} shows that this bound is tight as well.
}

\begin{figure}[ht]
  \centering
  \begin{tikzpicture}[->,shorten >=1pt,auto,node distance=2cm, thin]
    \node (s) [circle, draw, minimum size=1cm] at (0,0) {$s$};
    \node (v1) [circle, draw, minimum size=1cm] at (2,0) {$v_1$};
    \node (v2) [circle, draw, minimum size=1cm] at (4,0) {$v_2$};
    \node (d) at (5,0) {$\cdots$};
    \node (vn1) [circle, draw, minimum size=1cm] at (6,0) {$v_{n-1}$};
    \node (vn) [circle, draw, minimum size=1cm] at (8,0) {$v_n$};
    \node (t) [circle, draw, minimum size=1cm] at (10,0) {$t$};

    \path
    (s) edge node {1} (v1)
    (v1) edge node {1} (v2)
    (vn1) edge node {1} (vn)
    (vn) edge node {1} (t)
    ;

    \path [->,decoration=snake]
    (s) edge [decorate, bend right=60] node [below] {$b(1-\varepsilon)$} (v1)
    (v1) edge [decorate,bend right=60] node [below] {$b(1-\varepsilon)$} (v2)
    (vn1) edge [decorate,bend right=60] node [below] {$b(1-\varepsilon)$} (vn)
    (vn) edge [decorate,bend right=60] node [below] {$b(1-\varepsilon)$} (t)
    ;
  \end{tikzpicture}
  \caption{$G$ such that $R^* = b+n$, $R = b(n+1)(1-\varepsilon)$}
  \label{fig:rtight}
\end{figure}



\begin{figure}[ht]
  \centering
  \begin{tikzpicture}[->,shorten >=1pt,auto,node distance=2cm, thin]
    \node (u1) [circle, draw, minimum size=1cm] at (-5,0) {$u$};
    \node (v1) [circle, draw, minimum size=1cm] at (-3,0) {$v$};
    
    \path [->,decoration=snake]
    (u1) edge [decorate,bend right=60] node [below] {$b(1-\varepsilon)$} (v1)
    ;
    \node (u) [circle, draw, minimum size=1cm] at (0,0) {$u$};
    \node (1) [circle, draw, minimum size=1cm] at (2,0) {};
    \node (d) [circle, minimum size=1cm] at (4,0) {$\cdots$};
    \node (3) [circle, draw, minimum size=1cm] at (6,0) {};
    \node (v) [circle, draw, minimum size=1cm] at (8,0) {$v$};
    \draw
    [-,thick,decorate,decoration={brace,amplitude=10pt}](0,.5) -- (8,.5)
    node[black,midway,yshift=0.4cm]
    {$\frac{b(1-\varepsilon)}{\varepsilon}$};
    
    \path
    (u) edge node {$\varepsilon$} (1)
    (1) edge node {$\varepsilon$} (d)
    (d) edge node {$\varepsilon$} (3)
    (3) edge node {$\varepsilon$} (v)
    ;
  \end{tikzpicture}
  \caption{The curved edge represents $\frac{b(1-\varepsilon)}{\varepsilon}$
  edges of cost $\varepsilon$}
  \label{fig:snakeedge}
\end{figure}

In the graph shown in Figure~\ref{fig:rtight}, the reward required for the path
along the edges of cost 1 is $R^* = b+n$, but the minimum reward for which this
graph is traversable is $R = b(n+1)(1-\varepsilon)$. The curved edges represent
$\frac{b(1-\varepsilon)}{\varepsilon}$ edges, each with a cost of $\varepsilon$,
as described in Figure~\ref{fig:snakeedge}. This makes the perceived cost and
the actual cost of traversing the curved path approximately equal as
$\varepsilon$ goes to 0. As $n$ increases, the ratio between $R$ and $R^*$
approaches $b$.

}

\EC{
It is easy to see that this bound is tight by considering a graph which is
only a single edge with a cost $c$. For such a graph $R=b \cdot c$ and $C_o(s) =
c$. Furthermore, let $R_d$ be the reward required to traverse any path in $G$ in
isolation. Since $R_d \ge C_o(s)$, we have that $R \le b \cdot R_d$.
Figure~\ref{fig:rtight} in Appendix~\ref{app:rewardratio} shows that this bound
is tight as well.
}

We know that both $R$ and $C_s(s)$ are bounded by $b \cdot C_o(s)$ by
Theorems~\ref{thm:full} and~\ref{thm:rewardratio}. However, as shown in
Figure~\ref{fig:min_inf_reward}, neither is necessarily bounded by the other.
With $b = 2$, in $G_1$, $R = 2$ and $C_s(s) = 1$. In $G_2$, $R = 11$ and $C_s(s)
= 12$.

\begin{figure}[ht]
  \centering
  \subfloat[In $G_1$, $R > C_s(s)$]{
    \centering
    \begin{tikzpicture}[->,shorten >=1pt,auto,node distance=2cm, thin]
      \node (s) [circle, draw, minimum size=1cm] at (0,0) {$s$};
      \node (t) [circle, draw, minimum size=1cm] at (2,0) {$t$};
      
      \path
      (s) edge node {1} (t)
      ;
    \end{tikzpicture}
  }
  ~ 
  \subfloat[In $G_2$, $C_s(s) > R$]{
    \centering
    \begin{tikzpicture}[->,shorten >=1pt,auto,node distance=2cm, thin]
      \node (s) [circle, draw, minimum size=1cm] at (-2,0) {$s$};
      \node (u) [circle, draw, minimum size=1cm] at (0,0) {$u$};
      \node (v) [circle, draw, minimum size=1cm] at (2,0) {$v$};
      \node (w) [circle, draw, minimum size=1cm] at (3,-1.5) {$w$};
      \node (t) [circle, draw, minimum size=1cm] at (4,0) {$t$};

      \path
      (s) edge node {$2$} (u)
      (u) edge node {$4$} (v)
      (v) edge node {0} (w)
      (v) edge node {3} (t)
      (w) edge node {6} (t)
      ;
    \end{tikzpicture}
  }
  \caption{There is no strict relation between $R$ and $C_s(s)$. These examples
  use $b=2$.}
  \label{fig:min_inf_reward}
\end{figure}

%Figure~\ref{fig:reward} shows that this bound is tight.
%\figureswraper{
%\begin{figure}[ht]
%  \centering
%  \begin{tikzpicture}[->,shorten >=1pt,auto,node distance=2cm, thin]
%    \node (s) [circle, draw, minimum size=1cm] at (0,0) {$s$};
%    \node (t) [circle, draw, minimum size=1cm] at (2,0) {$t$};
%
%    \path
%    (s) edge node {1} (t)
%    ;
%  \end{tikzpicture}
%  \caption{$C_o(s) = 1$, $R=b$}
%  \label{fig:reward}
%\end{figure}
%}

\yhdr{Path-counting over Different Rewards}
As discussed in the introduction, it is a surprising fact
that the agent's decision to traverse
the graph need not be monotone in the reward $R$: it is possible for the
agent to decide to traverse the graph for a reward $R$, but to decide
not to for a larger reward $R'$.
We begin by illustrating this in a simple example:
\begin{example} 
Consider the graph in Figure~\ref{fig:non-monotone} with
$b = 2$. We observe that $G$ is traversable for 
$R_1 = 9$ and $R_3 = 11$, but is not traversable for 
$R_2 = 10$. This is because for a reward of 9, $w$ is abandoned, as the
perceived cost of going from $w$ to $t$ is 10. This means that the only
possible path is $s \to v \to t$, and the perceived costs at $s$ and $v$ are 9
and 6 respectively. For a reward of 10, $w$ is no longer abandoned, meaning from
$v$, the agent would choose to go to $w$ instead of $t$. As a result, the
perceived cost from $s$ is 11, and the agent would be unwilling to start. For a
reward of 11, the agent is willing to follow the $s \to v \to w \to t$ path.

\figureswraper{
\begin{figure}[ht]
  \centering
  \begin{tikzpicture}[->,shorten >=1pt,auto,node distance=2cm, thin]
    \node (s) [circle, draw, minimum size=1cm] at (0,0) {$s$};
    \node (v) [circle, draw, minimum size=1cm] at (2,0) {$v$};
    \node (w) [circle, draw, minimum size=1cm] at (3,-1.5) {$w$};
    \node (t) [circle, draw, minimum size=1cm] at (4,0) {$t$};

    \path
    (s) edge node {3} (v)
    (v) edge node {0} (w)
    (v) edge node {3} (t)
    (w) edge node {5} (t)
    ;
  \end{tikzpicture}
  \caption{For $b = 2$, the graph is traversable for $R_1 = 9$ and $R_3 = 11$ but not $R_2 =
  10$} \label{fig:non-monotone}
\end{figure}

} % end figure wrapper
\end{example}

Next we show that an agents can traverse exponentially many different path for different rewards.
\begin{theorem}
There exists a graph $G$, and a set $S$ consisting of an 
exponential number of different rewards, 
such that for every reward in $S$ the agent will traverse a different path.
\end{theorem}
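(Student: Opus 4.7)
The plan is to construct a graph built from $n$ chained copies of a non-monotone gadget similar to the one in Figure~\ref{fig:non-monotone}, engineered so that each gadget can be in one of two ``modes'' that is independently selected by choosing $R$ from an appropriate range. This will produce $2^n$ distinct mode vectors, each realized by some reward value, and each giving rise to a distinct $s$-to-$t$ path through the overall graph.

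The base gadget will mimic Figure~\ref{fig:non-monotone}: a short segment along a main spine together with an auxiliary node $w_i$ that, if not abandoned, causes the agent's future self to detour through $w_i$ rather than proceed directly along the spine to the next gadget. Whether $w_i$ is abandoned is controlled by whether $R$ exceeds a local threshold $T_i$ determined by the gadget's internal costs. Chaining $n$ such gadgets $G_1, \ldots, G_n$ along a spine from $s$ to $t$, the full path that the agent follows is determined by the vector of abandonment decisions at $w_1, \ldots, w_n$; if each of the $2^n$ possible mode vectors can be realized by some reward value, we obtain $2^n$ distinct paths and thus an exponentially sized set $S$.

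The hard part will be handling the coupling between gadgets induced by the sophisticated agent's backward induction: whether $w_i$ is abandoned depends on the perceived cost of reaching $t$ from $w_i$, which in turn depends on how gadgets $G_{i+1}, \ldots, G_n$ resolve. To decouple the gadgets, I would scale costs geometrically so that the internal costs of gadget $G_i$ dominate the combined cost contribution of all gadgets downstream by a large multiplicative factor. This makes the downstream cost only a negligible perturbation to the threshold $T_i$, so that the switching behavior at $G_i$ is effectively a function only of the comparison of $R$ with $T_i$. With the thresholds $T_1, \ldots, T_n$ placed at widely separated scales, I would then exhibit a set $S$ of $2^n$ reward values realizing all possible mode vectors, using the pruning characterization from Claim~\ref{claim:reach-t-fixed-reward} gadget-by-gadget to verify inductively (from $G_n$ backward to $G_1$) that each reward in $S$ produces the intended abandonment pattern and hence a distinct path.
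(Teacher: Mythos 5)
Your high-level architecture (a chain of $n$ two-mode gadgets along a spine, with the path determined by the vector of abandonment decisions at $w_1,\dots,w_n$) matches the paper's construction, but your decoupling step breaks the argument. Within a single gadget, whether $w_i$ is abandoned is \emph{monotone} in $R$: $w_i$ is abandoned exactly when $R$ is below the perceived cost of continuing from $w_i$ to $t$. (The non-monotonicity in Figure~\ref{fig:non-monotone} is in the traversability of the whole graph as seen from $s$, not in the abandonment decision at $w$ itself.) If you scale the gadgets so that the downstream contribution to each threshold $T_i$ is a negligible perturbation, then the mode of gadget $i$ becomes a fixed threshold function of $R$ alone, and a single scalar $R$ compared against $n$ fixed thresholds can realize at most $n+1$ distinct mode vectors --- the ``staircase'' ones --- not $2^n$. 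So the fully decoupled construction yields only linearly many distinct paths.

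What makes the exponential count possible is precisely the coupling you are trying to suppress. In the paper's construction the direct edge $v_i \to v_{i+1}$ has cost $2^i$ and the detour edge $w_i \to v_{i+1}$ has cost $c \cdot 2^i$, with the parameters tuned so that $1+c = bc$; the perceived cost of entering $w_i$ then works out to $2^n + (bc-2)(2^i + x_{i+1})$, where $x_{i+1}$ encodes the modes already fixed by backward induction at the downstream gadgets. The effective threshold at gadget $i$ therefore \emph{shifts} by an amount commensurate with the gadget's own bit value, depending on which higher-order bits are set, and comparing against a reward of the form $R = 2^n + (bc-2)x$ reads off bit $i$ of $x$ exactly as in binary expansion. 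To repair your proof you would need to arrange the weights so that each gadget's threshold moves with the downstream resolution at the same scale as that gadget's contribution (the doubling weights accomplish this), rather than making that dependence negligible.
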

\begin{proof}
The graph shown in Figure~\ref{fig:exp_paths} functions like a binary counter --
each $v_i \to v_{i+1}$ structure represents a bit, where $i=0$ corresponds to
the least significant bit. If the agent goes directly from $v_i$ to $v_{i+1}$,
that denotes a binary 0, while following the $v_i \to w_i \to v_{i+1}$ path
denotes a binary 1. With the proper choice of $b$ and $c$, we can get the agent
to follow a path corresponding to each number between 0 and $2^n - 1$.

\figureswraper{

\begin{figure}[ht]
  \centering
  \begin{tikzpicture}[->,shorten >=1pt,auto,node distance=2cm, thin]
    \node (s) [circle, draw, minimum size=1cm] at (-6,0) {$s$};
    \node (v0) [circle, draw, minimum size=1cm] at (-4,0) {$v_0$};
    \node (v1) [circle, draw, minimum size=1cm] at (-2,0) {$v_1$};
    \node (v2) [circle, draw, minimum size=1cm] at (0,0) {$v_2$};
    \node (d) at (2, 0) {$\cdots$};
    \node (v3) [circle, draw, minimum size=1cm] at (3,0) {$v_{n-1}$};
    \node (t) [circle, draw, minimum size=1cm] at (5,0) {$t$};
    \node [below = 1cm of v0] (w0) [circle, draw, minimum size=1cm] {$w_0$};
    \node [below = 1cm of v1] (w1) [circle, draw, minimum size=1cm] {$w_1$};
    \node [below = 1cm of v2] (w2) [circle, draw, minimum size=1cm] {$w_2$};
    \node [below = 1cm of v3] (w3) [circle, draw, minimum size=1cm] {$w_{n-1}$};

    \path
    (v0) edge [decorate] node [above] {1} (v1)
    (v1) edge [decorate] node [above] {2} (v2)
    (v2) edge [decorate] node [above] {4} (d)
    (v3) edge [decorate] node [above] {$2^{n-1}$} (t)

    (s) edge node {0} (v0)

    (v0) edge node {0} (w0)
    (v1) edge node {0} (w1)
    (v2) edge node {0} (w2)
    (v3) edge node {0} (w3)

    (w0) edge node [below right] {$c$} (v1)
    (w1) edge node [below right] {$2c$} (v2)
    (w2) edge node [below right] {$4c$} (d)
    (w3) edge node [below right] {$c \cdot 2^{n-1}$} (t)
    ;
  \end{tikzpicture}
  \caption{Graph with exponentially many paths}
  \label{fig:exp_paths}
\end{figure}
} % end figure wrapper

Specifically, for a reward of $2^n + (bc-2)x$, the agent will follow the path
corresponding to the binary representation of $x$ (where $x \le 2^n - 1$), under
the following conditions: (1) $c > 1$, (2) $b \ge c$ and (3) $1 + c = bc$. Note that the last condition implies that $b < 2$ because otherwise, we would
have $c \le 1$. Furthermore, the first and third conditions imply $bc > 2$
because $1 + c > 2$. We defer the proof to Appendix~\ref{app:exp-paths}.
\end{proof}

Moreover, by considering $w_0$ to be the start vertex of this
instance, we obtain an example in which the set of rewards for which
the agent will traverse the graph consists of exponentially many
disjoint intervals, since $w_0$ is abandoned for reward $2^n + (bc-2)x$ with
even values of $x$ between 0 and $2^n-1$.

\subsection{Modifying the Graph}  \label{sec-modifying}
We now consider two methods of modifying the graph to 
reduce the reward required
to make it traversable: (i) deleting edges from the graph,
and (ii) placing rewards on internal edges instead of just on $t$.

\yhdr{Motivating Subgraphs and Edge Deletion}

If we have the ability to delete edges from $G$, with a reward $R$
at $t$, we can ask about
the structure of a {\em minimal motivating subgraph}: a subgraph of $G$
with the property that the agent will traverse it, but will not traverse
any proper subgraph of it.
For sophisticated agents a minimal motivating subgraph is always a single path. Intuitively, we observe that if an agent is willing to traverse a path $P$ as part of a larger graph $G$ for a reward $R$, then it is willing to
traverse $P$ in isolation for $R$.  \EC{In Appendix~\ref{app:deletion}} \arxiv{Next,} we prove this and show an algorithm for computing the minimal reward required to motivate the agent to reach $t$ in any subgraph.
\begin{claim} \label{clm:single-path}
  A minimal motivating subgraph of a graph $G$ must be a single path.
\end{claim}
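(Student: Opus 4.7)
The plan is to show that if $H \subseteq G$ is a minimal motivating subgraph, and if $P = (s = v_0, v_1, \ldots, v_k = t)$ is the path that the sophisticated agent actually traverses in $H$ (well-defined by Claim~\ref{claim:reach-t-fixed-reward}), then $P$, viewed as a subgraph on its own, is already motivating. Since $P \subseteq H$ and $H$ is minimal, this forces $H = P$, which is exactly the claim.

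To execute this plan I would first record two consequences of the assumption that the agent traverses $P$ in $H$: for every $i < k$, (i) the vertex $v_i$ survives the backward pruning process of Claim~\ref{claim:reach-t-fixed-reward} (otherwise the agent would abandon before reaching $t$), and (ii) the sophisticated choice at $v_i$ is $S_s(v_i) = v_{i+1}$. Iterating (ii) along the path, the sophisticated cost in $H$ satisfies
\[
C_s^H(v_i) \;=\; \sum_{j=i}^{k-1} c(v_j, v_{j+1}),
\]
i.e., it coincides with the true cost of the remaining suffix of $P$; and (i) translates into $b \cdot c(v_i, v_{i+1}) + C_s^H(v_{i+1}) \le R$ for each $i$.

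The core step is then a reverse induction on $i$ along $P$ showing that in the subgraph consisting only of $P$, every $v_i$ also survives pruning and $C_s^P(v_i) = C_s^H(v_i)$. The base $i=k$ is immediate. For the inductive step the unique edge out of $v_i$ in $P$-alone is $(v_i, v_{i+1})$; its perceived cost equals $b \cdot c(v_i, v_{i+1}) + C_s^P(v_{i+1}) = b \cdot c(v_i, v_{i+1}) + C_s^H(v_{i+1}) \le R$, so $v_i$ is not abandoned, the agent advances to $v_{i+1}$, and $C_s^P(v_i) = C_s^H(v_i)$. Applying this at $i = 0$ shows that $s$ is not abandoned in $P$-alone and the agent follows $P$ from $s$ to $t$, so $P$ is motivating, and minimality of $H$ delivers $H = P$.

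The subtle point, and the one place where the argument could plausibly fail, is the worry that deleting the non-$P$ edges of $H$ might cause some downstream $v_i$ to become newly abandoned in the reduced subgraph, thereby breaking the chain of reasoning at earlier vertices. The inductive step above is precisely what rules this out: because the agent in $H$ was already choosing the $P$-edge over all alternatives at each $v_i$, the sophisticated cost used in the feasibility check at $v_i$ depended only on the $P$-suffix, and that suffix is still present in $P$-alone. Once this observation is in hand, the rest of the proof is routine bookkeeping.
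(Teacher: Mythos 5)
Your proposal is correct and follows essentially the same route as the paper: both arguments rest on the observation that when the agent actually follows $P$ inside a larger motivating subgraph, the perceived cost at each vertex of $P$ depends only on the $P$-suffix, so $P$ in isolation is traversable for the same reward and minimality forces the subgraph to equal $P$. Your explicit reverse induction along $P$ is just a more carefully formalized version of the paper's argument.
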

\arxiv{
\begin{proof}
  For a given path $P$, the minimum reward required to motivate an agent to
  traverse that path in isolation is just the maximum over all vertices of the
  perceived cost of that vertex. At a vertex $u$, the perceived cost is $b \cdot
  c(u,v)$ plus the true cost of the path from $v$ to $t$ where $v$ is the next
  vertex along $P$. As long as the agent follows $P$, there is no graph of which
  $P$ is a strict subgraph that requires a smaller reward to be traversable. To
  see this, let $R$ be the minimum required to motivate the agent to traverse
  $P$ in isolation. This means there is some vertex $u$ such that the perceived
  cost of the portion of $P$ from $u$ to $t$ is $R$. If a reward $R' < R$ is
  placed on $t$, then the perceived cost of following $P$ from $u$ will be
  larger than the reward, and therefore the agent will not follow $P$ from $u$
  to $t$. This means that for any path $P$, if the agent is unwilling to
  traverse $P$ in isolation for a given reward, then it is unwilling to follow
  $P$ in any graph of which $P$ is a strict subgraph. Thus, if the agent is
  willing to follow $P$ for a given reward in a larger graph $G$, then it is
  willing to follow $P$ in isolation, meaning that $P$ is a minimal motivating
  subgraph of $G$.
\end{proof}

\EC{
Next, we prove Claim \ref{clm:reward-alg}
}
\arxiv{
\begin{claim} 
  In polynomial time we can find the minimum $R$ such that
  placing a reward of $R$ at $t$ and deleting a subset of edges from $G$
  will motivate the agent to reach $t$ in the resulting subgraph.
\end{claim}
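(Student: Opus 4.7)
By Claim~\ref{clm:single-path}, any minimal motivating subgraph is a single $s$--$t$ path, so the minimum reward is $\min_P R(P)$ over $s$--$t$ paths $P$ in $G$, where $R(P) = \max_i(b c_i + \sum_{j>i} c_j)$ is the minimum reward needed to motivate the agent to traverse $P$ in isolation. The task thus reduces to finding, in polynomial time, the $s$--$t$ path in $G$ that minimizes $R(\cdot)$.

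My plan is to enumerate, for each edge $e = (u, v) \in E$, the family of paths for which $e$ serves as the ``tight edge'' achieving the max in the definition of $R(P)$. On such a path the reward equals $R = b\, c(u, v) + D$, where $D$ is the cost of the $v$-to-$t$ suffix of the path. Substituting $R = b\, c(u, v) + D$ into the per-vertex perceived-cost constraints and simplifying, the conditions on the prefix (from $s$ to $u$) and on the suffix (from $v$ to $t$) decouple from each other, and neither involves the unknown $D$: the prefix constraint at a vertex $x$ with successor $y$ reduces to $b\, c(x, y) + \operatorname{cost}(y, u) \le (b-1)\, c(u, v)$, and the suffix constraint at $x$ with successor $y$ reduces to $(b-1)\, c(x, y) \le b\, c(u, v) + \operatorname{cost}(v, x)$. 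Each resulting subproblem is a constrained shortest-path problem on a subgraph of $G$, solvable by polynomial-time dynamic programming; combining the two yields $R_e = b\, c(u, v) + D^{\ast}_e$ whenever both subproblems are feasible, and returning $\min_e R_e$ gives $R^{\ast}$.

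The main obstacle is the suffix subproblem: its per-vertex constraint depends on the path's accumulated cost from $v$, so a naive reverse-topological DP does not apply directly, because whether an edge later in the path is usable depends on choices made earlier. I would handle this with a forward DP on the suffix DAG that, at each vertex, maintains a Pareto frontier of (accumulated cost from $v$, minimum remaining cost to $t$) pairs; exploiting the structure---each out-edge $(x, y)$ contributes a single accumulated-cost threshold of the form $(b-1)\, c(x, y) - b\, c(u, v)$---should allow arguing that these frontiers stay polynomial in $|V|$ and $|E|$, giving a polynomial-time algorithm overall.
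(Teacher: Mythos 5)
Your reduction to finding the $s$--$t$ path minimizing $R(P)$, and the idea of decomposing by which edge is tight, are sound: fixing the tight edge $(u,v)$ does decouple the prefix and suffix constraints, and the prefix subproblem is genuinely easy (minimizing the true cost of the feasible subpath from $y$ to $u$ can only help every upstream constraint, so a reverse greedy DP works). The gap is the suffix subproblem, which you have correctly located but not closed. There the condition for using an edge $(x,y)$ is a \emph{lower} bound on the cost accumulated from $v$ to $x$, while the objective is to minimize the total accumulated cost at $t$; this is a real bicriteria trade-off, and the set of achievable accumulated-cost values at a vertex can be exponential in $n$ (it is the set of distinct path costs from $v$ to $x$). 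The assertion that the Pareto frontiers ``should'' stay polynomial is precisely the step that needs a proof, and the single-threshold-per-edge structure does not obviously supply one: which frontier points must be retained at $x$ depends on thresholds of edges arbitrarily far downstream, offset by suffix costs that themselves range over exponentially many values. This has the shape of constrained shortest-path problems that are NP-hard in general, so the burden is on you to exhibit the polynomial bound; without it the argument does not establish a polynomial-time algorithm.

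The paper's proof avoids this entirely with a two-level argument you could adopt instead. For a \emph{fixed} reward $R$, one can decide in linear time whether some subgraph (equivalently, some path) is traversable: process vertices in reverse topological order and at each vertex pick, among the non-abandoned successors whose perceived cost is at most $R$, the one minimizing the true suffix cost. This greedy is correct because minimizing the true suffix cost from $v$ simultaneously minimizes the perceived cost seen at every predecessor of $v$, so if any path with all perceived costs at most $R$ exists, the greedy finds one. Since traversability of some subgraph is monotone in $R$, a binary search over $[C_o(s),\, b\cdot C_o(s)]$ then recovers the minimum. The structural fact you are missing is this monotone fixed-$R$ decision procedure; once you have it, the tight-edge enumeration and the Pareto DP become unnecessary.
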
 }
\begin{proof}
Given a graph $G$ and a reward $R$, we can find a motivating subgraph
traversable for with $R$ placed on $t$ (if one exists) in linear time. To do so,
we consider each vertex in reverse topological order. At each vertex $u$, we
choose the neighbor $v$ such that $v$ is not abandoned, the perceived cost of
going to $v$ from $u$ is no more than $R$, and the true cost of going through
$v$ is less than it is through any other neighbor that satisfies the first two
conditions. If no such neighbor exists, then $u$ is abandoned. Since we process
the vertices in reverse topological ordering, when we consider $u$, we already
know the true cost of the path through $v$. Finally, we build a path by starting
at $s$ and following the choices of each vertex until we reach $t$, and that
path is a motivating subgraph for $R$. If $s$ is abandoned, there is no path
such that the perceived cost at every point on that path is no more than $R$. If
such a path existed, then no vertex on that path would be abandoned by the above
algorithm (as the next vertex on the path would always satisfy the conditions),
and so $s$ would not be abandoned. Note that this only holds because at each
vertex we minimize the true cost over all possible successors. This ensures that
the perceived cost at any subsequent vertex we consider is minimized as well, so
if there is any way for the perceived cost to be smaller than $R$ along the
entire path, this algorithm will find it.

Let $R_d$ be the minimum reward required to motivate the agent to reach $t$ from
$s$ in any subgraph of $G$. Equivalently, $R_d$ is the minimum reward required
to motivate the agent to reach $t$ given that we can delete arbitrarily many
edges from the graph. Then, the above algorithm can find $R_d$ in polynomial
time simply by binary searching over all rewards between $C_o(s)$ and $b \cdot
C_o(s)$ using the algorithm described above.
\end{proof}

}
\EC{
\begin{claim} \label{clm:reward-alg}
  In polynomial time we can find the minimum $R$ such that
  placing a reward of $R$ at $t$ and deleting a subset of edges from $G$
  will motivate the agent to reach $t$ in the resulting subgraph.
\end{claim}
}

\yhdr{Adding Internal Rewards to the Graph}
We also consider a slightly less powerful model, in which instead of deleting
edges, we place rewards on internal edges instead of only on $t$. More
precisely, the agent collects some reward $r(u,v)$ after traversing the edge
$(u,v)$. Note that the reward is only collected after the current time step, so
its value does not get multiplied by $b$.

Let $R_i$ be the minimum reward that can be distributed among the
edges to motivate the agent to reach $t$ and let $R_d$ be the minimal
reward required to motivate the agent to reach $t$ in any subgraph of $G$ (i.e.
enabling edge deletion). We first observe that internal rewards
cannot be more powerful than arbitrary deletion (i.e., $R_i \geq
R_d$). To see why, let $P$ be the path that the agent takes in the
graph with internal rewards $R_i$. It has to be the case that if the
agent traverses $P$ in the graph it will also traverse it in isolation
(as established in Claim \ref{clm:single-path}). Now, for a path in
isolation it is easy to see that the best way to distribute rewards is
put all of them at $t$. Thus, the agent will traverse $P$ for
a reward of $R_i$ on $t$.

Moreover, edge deletion is strictly more powerful than internal rewards as we illustrate next.

\begin{example}
The graph in Figure~\ref{fig:rewarddist} with $b=4$ and $R=52$. $G'$
is traversable under these parameters, but $G$ is not. Furthermore, it is
impossible to distribute $R$ along the edges of $G$ to make it traversable.

\begin{figure}[ht]
  \centering
  \subfloat[$G$]{
    \centering
    \begin{tikzpicture}[->,shorten >=1pt,auto,node distance=2cm, thin]
      \node (s) [circle, draw, minimum size=1cm] at (-2,0) {$s$};
      \node (u) [circle, draw, minimum size=1cm] at (0,0) {$u$};
      \node (v) [circle, draw, minimum size=1cm] at (1,-1.5) {$v$};
      \node (w) [circle, draw, minimum size=1cm] at (2,0) {$w$};
      \node (t) [circle, draw, minimum size=1cm] at (4,0) {$t$};
      
      \path
      (s) edge node {8} (u)
      (u) edge node {10} (w)
      (u) edge node [below left] {5} (v)
      (v) edge node [below right] {6} (w)
      (w) edge node {10} (t)
      ;
    \end{tikzpicture}
  }
  ~ 
  \subfloat[$G'$]{
    \centering
    \begin{tikzpicture}[->,shorten >=1pt,auto,node distance=2cm, thin]
      \node (s) [circle, draw, minimum size=1cm] at (-2,0) {$s$};
      \node (u) [circle, draw, minimum size=1cm] at (0,0) {$u$};
      \node (v) [circle, minimum size=1cm] at (1,-1.5) {};
      \node (w) [circle, draw, minimum size=1cm] at (2,0) {$w$};
      \node (t) [circle, draw, minimum size=1cm] at (4,0) {$t$};
      
      \path
      (s) edge node {8} (u)
      (u) edge node {10} (w)
      (w) edge node {10} (t)
      ;
    \end{tikzpicture}
  }
  \caption{With $b=4$ and $R=52$, it is impossible to distribute $R$ such that
  the agent follows the $s \to u \to w \to t$ path}
  \label{fig:rewarddist}
\end{figure}

\end{example}

With a more complex construction, we show the following
in Appendix~\ref{app:internal-ratio}.
\begin{theorem}
There exist instances in which the ratio between $R_d$ and $R_i$ is as high as $2-\eps$ for any $\eps>0$.
\end{theorem}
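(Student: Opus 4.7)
The plan is to exhibit a family of graphs on which the minimum motivating reward under edge deletion, $R_d$, is close to half the minimum total internal reward $R_i$. The construction generalizes Figure~\ref{fig:rewarddist} by taking a long "direct backbone" path $s = u_0 \to u_1 \to \dots \to u_n = t$ and attaching to each segment $(u_i, u_{i+1})$ a tempting detour through an intermediate vertex $v_i$, with edge costs calibrated so that the sophisticated agent at each $u_i$ strictly prefers the detour unless explicitly redirected.

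First, I would fix the backbone costs so that, after deleting every $v_i$, the backbone becomes a single path whose minimum motivating reward $R_d$ is computed by Claim~\ref{clm:single-path} as the maximum over $i$ of the perceived cost from $u_i$ to $t$. Simultaneously I would tune each detour $(u_i \to v_i \to u_{i+1})$ so that, in the original graph and in the absence of internal rewards, the sophisticated agent at $u_i$ perceives traversing the detour as strictly cheaper than the direct edge $(u_i, u_{i+1})$, and so that the perceived-cost gap at $u_i$ between direct and detour is close to the full remaining cost of the backbone from $u_{i+1}$ to $t$. This is the analogue of the asymmetry in Figure~\ref{fig:rewarddist}, amplified along a chain.

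Next, I would lower-bound $R_i$ by a case analysis on the path $P$ that the agent actually traverses. By Claim~\ref{clm:single-path}, $P$ must be self-motivating in isolation with total reward $R_i$, so if $P$ uses any detour $v_i$, the perceived cost of $P$ alone already exceeds $(2 - \eps)R_d$ by the calibration above. If instead $P$ is exactly the backbone, then at each $u_i$ the internal reward placed on the direct edge $(u_i, u_{i+1})$ must individually exceed the detour's perceived advantage at $u_i$; these "redirection" rewards are localized to distinct edges and therefore sum without overlap to nearly $R_d$. Adding the final anti-abandonment reward near $t$, which must on its own cover the perceived cost at the last backbone vertex, pushes the total to at least $(2 - \eps)R_d$.

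The main obstacle is this lower bound: ruling out that a globally clever reward distribution exploits the agent's sophisticated lookahead by placing large rewards deep in the graph that simultaneously deter abandonment and divert the agent away from detours. The argument must show such "double duty" is impossible in the gadget — specifically, that because the remaining backbone beyond $u_{i+1}$ is identical whether the agent arrives at $u_{i+1}$ via the direct edge or via the detour through $v_i$, any downstream reward contributes equally to both perceived values at $u_i$ and so cannot help break the tie. This reduces the redirection task to a local reward on the edge $(u_i, u_{i+1})$ itself, making the redirection budget and the anti-abandonment budget essentially disjoint, from which $R_i \geq (2 - \eps) R_d$ follows by summing over the chain and letting $n \to \infty$.
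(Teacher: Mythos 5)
Your construction is in the same spirit as the paper's (Figure~\ref{fig:internal-ratio-redux}): a backbone $s \to v_1 \to \dots \to v_n \to t$ in which each direct edge of cost $y_i$ competes with a bypass made of many tiny edges whose true cost $p_i$ satisfies $y_i < p_i < b\, y_i$, so that the bypass is perceived as cheaper but is truly more expensive. You also correctly identify the structural fact that drives the lower bound on $R_i$: any reward placed downstream of $v_{i+1}$ contributes equally to the perceived value of the direct edge and of the bypass at $v_i$, so it cannot break the tie there, and the terminal anti-abandonment reward (which must sit on the last edge or at $t$ to cover the perceived cost $b\, y_n$ at $v_n$) is disjoint from the redirection rewards on earlier edges.

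The gap is in the lower-bound case analysis. First, your Case 1 is false as stated: since the bypass at $u_i$ must be perceived as cheaper than the direct edge, its true cost is capped by roughly $b\, y_i$, so its excess over $y_i$ is at most $(b-1)y_i$; for every detour except the last this is a vanishing fraction of $R_d$, so a path using an early detour does not come close to perceived cost $(2-\eps)R_d$. Second, and more importantly, you omit the designer's third option: placing a compensating reward of $p_i - y_i$ \emph{on the bypass itself}, which lets the agent take the detour while leaving the perceived cost at every earlier vertex unchanged. With that option, the designer pays only $\min\bigl(b\, y_i - p_i,\ (p_i - y_i) + \text{accumulated debt}\bigr)$ at each $i$, and unless the $p_i$ are calibrated so these alternatives are exactly balanced, a mixed strategy undercuts your claimed bound; your calibration (gap equal to the remaining backbone cost) does not achieve this balance. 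The paper closes the gap with the recursion $D_i = b\, y_i - p_i = (p_i - y_i) + D_{i-1}$, i.e.\ $p_i = y_i + p_{i-1}/2$ with $y_i = (b/(b-1))^i$, and then computes $R_i/R_d = 1 + (b-1)^2/(b(b+1)) - 2((b-1)/(2b))^n$, which tends to $2$ only when \emph{both} $n$ and $b$ grow large; letting $n \to \infty$ alone, as you propose, would not reach $2-\eps$.
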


Combining the facts in this section with the result of 
Theorem \ref{thm:rewardratio}, we have the following chain 
of inequalities that summarizes the relationships among the
quantities we have considered:
$C_o(s) \leq R_d \leq R_i \leq \Rmin \leq b C_o(s)$.

\subsection{Two Open Questions}
Finally, we note two interesting open questions based on the results
of this section.

First, because we can determine for any $R$ whether the agent
will traverse for a reward of $R$ at $t$, it follows that there is
a simple pseudo-polynomial algorithm to find the minimum $R^*$ at $t$
for which $G$ is traversable: we simple try each value of $R$ up to
$b C_o(s)$ and take the minimum.
However, because the set of $R$ for which $G$ is traversable need
not form a connected set, there is no natural way to use the
decision procedure for a single $R$ to perform binary for the minimum $R$.
Indeed, it is an intriguing open question whether the minimum reward $R^*$
can be found in polynomial time.

Second, as noted above, we do not know how large the ratio of $R_i$
to $R_d$ can be.

\section{Reward Seeking Behavior}
So far, we have focused on present-biased agents who seek to minimize the cost
they incur. However, we could also consider a model that consists
only of rewards ---
each edge $(u,v)$ has a reward $r(u,v)$, and the agent seeks to maximize the
reward that it collects while traversing the graph. In this model, 
the value $R_s(u)$ of the path that a sophisticated agent takes 
starting from $u$  
can be defined as follows: %by the following recursive formula:
$$R_s(u) = r(u, S_s^{(R)}(u)) + R_s(S_s^{(R)}(u)) \text{ where } S_s^{(R)}(u) =
\argmax_{v : (u,v) \in E} ~ b \cdot r(u, v) + R_s(v).$$

In this model, the optimal reward is the weight of the heaviest path
from $v$ to $t$ (denoted by $R_o(v)$). With these definitions, we can
again consider the ratio of rewards collected by the optimal agent
to that collected by a sophisticated agent; we refer to this as the 
{\em reward ratio}.
O'Donoghue and Rabin \cite{odonoghue-now-or-later} have observed that
in reward settings sophisticated agents can do quite badly as they
tend to collect rewards much earlier than they should.
In the following example we show that this principle holds in a strong
sense for our graph-theoretic model:
the reward ratio for sophisticated agents can be exponential in the
size of the graph.

\begin{example}
Consider the fan graph in
Figure~\ref{fig:soph_reward_ratio} with the value of
$c$ set slightly below $b$; 
more generally, such a graph could
have $n$ nodes $v_i$ with an edge $(v_i,t)$ of reward $c^i$.
By reverse induction from $v_n$, a sophisticated agent can work out 
that from $v_i$ it would prefer to go directly to $t$ (obtaining
a perceived reward of $bc^i$) than to continue on to $v_{i+1}$
(where, by induction, it would go directly to $t$, for a perceived
reward of $c^{i+1})$.
The consequence of this induction is that the agent will go 
directly from $s$ to $t$, obtaining a reward of $1$.
Intuitively, the agent is smart enough to predict that each of its
future selves will go directly to $t$, and so it would rather
go directly to $t$ immediately.
\figureswraper{
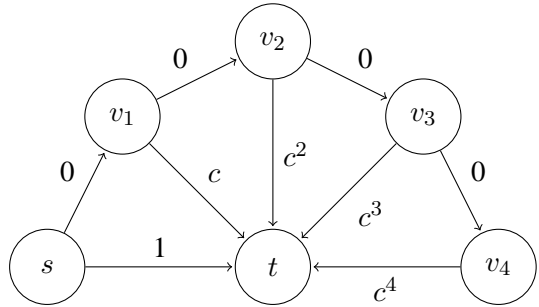
\begin{figure}[ht]
  \centering
  \begin{tikzpicture}[->,shorten >=1pt,auto,node distance=2cm, thin]
    \node (0) [circle, draw, minimum size=1cm] at (0,0) {$s$};
    \node (1) [circle, draw, minimum size=1cm] at (1,2) {$v_1$};
    \node (2) [circle, draw, minimum size=1cm] at (3,3) {$v_2$};
    \node (3) [circle, draw, minimum size=1cm] at (5,2) {$v_3$};
    \node (4) [circle, draw, minimum size=1cm] at (6,0) {$v_4$};
    \node (5) [circle, draw, minimum size=1cm] at (3,0) {$t$};
    
    \path
    (0) edge node {0} (1)
    (1) edge node {0} (2)
    (2) edge node {0} (3)
    (3) edge node {0} (4)

    (0) edge node {1} (5)
    (1) edge node {$c$} (5)
    (2) edge node {$c^2$} (5)
    (3) edge node {$c^3$} (5)
    (4) edge node {$c^4$} (5)
    ;
  \end{tikzpicture}
  \caption{Reward ratio $b^{O(n)}$ for sophisticated agent}
  \label{fig:soph_reward_ratio}
\end{figure}
} %end figurewrapper
\end{example}

Interestingly, and in contrast to sophisticated agents,
naive agents are doing quite well in reward settings. For the graph in
Figure \ref{fig:soph_reward_ratio} a naive agent will collect a reward
of $c^{n-1}$ (since it naively believes it is going to reach $v_n$
all the way until it gets to $v_{n-1}$).
More generally, we show \EC{in Appendix
\ref{app-naive-reward} }that in any graph the reward ratio for
a naive agent is at most $b$. \arxiv{
%\subsection{The Reward Ratio for a Naive Agent} \label{app-naive-reward}
First, we observe that the reward that a naive agent will collect starting at $u$ is:
$$R_n(u) = r(u, S_n^{(R)}(u)) + R_n(S_n^{(R)}(u)) \text{ where } S_n^{(R)}(u) =
\argmax_{v : (u,v) \in E} ~ b \cdot r(u, v) + R_o(v). $$
Next we show that the reward ratio for a naive agent is bounded by doing an induction over the node's height
\begin{claim} \label{clm:naive_reward}
 The reward ratio for a naive agent is at most $b$ ($b \cdot R_n(s) \ge R_o(s)$).
\end{claim}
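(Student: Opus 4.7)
The plan is to prove $b \cdot R_n(u) \ge R_o(u)$ for every node $u$ by induction on the height of $u$ in the DAG, where as before the height is the length (in edges) of the longest unweighted path from $u$ to $t$. The base case is the node $t$ itself, where $R_n(t) = R_o(t) = 0$, and the inequality holds trivially.

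For the inductive step, fix a node $u$ of height $k+1$, and let $v^{\ast} = S_n^{(R)}(u)$ and $v_o = S_o(u)$ be the successors chosen by the naive agent and the optimal agent, respectively. Both $v^{\ast}$ and $v_o$ have height at most $k$, so the inductive hypothesis applies to them. The key is that because the naive agent (incorrectly) believes all of its future selves will act optimally, the quantity it is maximizing over its successors is exactly $b \cdot r(u,v) + R_o(v)$. In particular, by the definition of $v^{\ast}$,
\[
b \cdot r(u, v^{\ast}) + R_o(v^{\ast}) \;\ge\; b \cdot r(u, v_o) + R_o(v_o).
\]

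Now I would chain three simple inequalities. First, using $b \ge 1$ together with $R_o(u) = r(u,v_o) + R_o(v_o)$, the right-hand side above is at least $r(u,v_o) + R_o(v_o) = R_o(u)$. Second, applying the inductive hypothesis to $v^{\ast}$ gives $R_o(v^{\ast}) \le b \cdot R_n(v^{\ast})$, so the left-hand side above is at most $b \cdot r(u, v^{\ast}) + b \cdot R_n(v^{\ast}) = b \cdot R_n(u)$. Combining these yields $b \cdot R_n(u) \ge R_o(u)$, completing the induction. Setting $u = s$ gives the claim.

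I do not expect any real obstacle here: the entire argument rests on the single observation that the naive agent's maximization uses the \emph{true} optimal continuation values $R_o(v)$ at its successors (which is precisely what distinguishes it from the sophisticated case), so a local comparison to $v_o$ is immediate, and the factor of $b$ on the present-step reward is exactly what absorbs the slack between perceived and collected rewards at subsequent steps. The structure mirrors the proof of Theorem~\ref{thm:full}, with maxima replacing minima and with the roles of ``bias inflates perceived present cost'' and ``bias inflates perceived present reward'' dualized.
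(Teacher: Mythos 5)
Your proposal is correct and follows essentially the same argument as the paper's proof: an induction on the node's height using the chain $b \cdot R_n(u) = b \cdot r(u,v^{\ast}) + b \cdot R_n(v^{\ast}) \ge b \cdot r(u,v^{\ast}) + R_o(v^{\ast}) \ge b \cdot r(u,v_o) + R_o(v_o) \ge R_o(u)$, where the middle step exploits exactly the observation you highlight, namely that the naive agent's $\argmax$ uses the true optimal continuation values $R_o(\cdot)$. The only cosmetic difference is that you bound the two sides of the preference inequality separately rather than writing the chain in one line.
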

\begin{proof}
  Consider some node $u$ from which the optimal agent goes to $v_o$ and the naive agent goes to $v_n$. Assume inductively that the claim holds for $v_n$ and $v_o$.
  Then, by using the induction assumption we have that :
  %By the induction assumption we have that
  \begin{align*}
    b \cdot R_n(u) &= b \cdot c(u,v_n) + b \cdot R_n(v_n) \\
                   &\ge b \cdot c(u,v_n) + R_o(v_n)
  \end{align*}  
  Since the naive agent prefers $v_n$ over $v_o$ we have that  
    \begin{align*}
      b \cdot R_n(u) &\ge b \cdot c(u,v_n) + R_o(v_n) \\
				     & \ge b \cdot c(u,v_o) + R_o(v_o) \\
				     &\ge c(u,v_o) + R_o(v_o) = R_o(u)
    \end{align*}              
\end{proof}

}

\subsection{Commitment Devices for Sophisticated Agents} \label{sec:commitment}

In example in Figure \ref{fig:soph_reward_ratio} above, we observed
that sophisticated agents can exhibit a high reward ratio. 
Since sophisticated agents are able to plan accurately about their
own biases, it is reasonable to consider ways they can modify
the instance to achieve better performance.
Such methods are referred to as commitment devices in the literature,
and they can be viewed as ways of committing one's future self
to a constrained course of action \cite{brocas2004commitment}. 
For example, a sophisticated individual
who believes they have a problem with spending too much might
put most of their money in a savings program that makes withdrawal
costly or difficult.

In this section we explore three ways of modeling commitment devices,
together with guarantees that apply to the set of all graphs.
We motivate them using the example in Figure \ref{fig:soph_reward_ratio}.

% \begin{enumerate}
% \item 
\textbf{1. Paying now to increase the reward later.} We note that
in Figure \ref{fig:soph_reward_ratio},
increasing the reward on the edge $(v_4,t)$ even by a small $\eps$
suffices to make the agent behave optimally and hence increase its
reward exponentially. The small increase in the reward on the edge
$(v_4,t)$ will cause the agent at $v_{3}$ prefer continuing to $v_4$. Given
this the agent at $v_2$ will continue to $v_3$ instead of going
directly to $t$ and so on. This means that the agent at $s$ will be
willing to pay quite a lot to increase the reward on the edge
$(v_4,t)$. Formalizing this idea we define a commitment device with a
``planning phase'' that allows the agent at $s$ to pay any cost $B$
and in return distribute an equivalent amount of reward along the
edges. Since the agent is present-biased the perceived cost of
spending this budget at the planning phase is $b \cdot B$ and not just $B$. In
Appendix \ref{app:planning} we analyze this commitment device and show
that (although it helps considerably for the example
in Figure \ref{fig:soph_reward_ratio})
there are instances in which even with this device the reward
ratio remains exponential.

% \item 
\textbf{2. Adding bypasses.} 
In Figure \ref{fig:soph_reward_ratio},
consider adding
a direct edge of reward $0$ 
directly from $s$ to $v_4$. Adding such an edge will make the agent
traverse the path $s \rightarrow v_4 \rightarrow t$ and collect the
maximum reward. In Appendix \ref{app:zero-cost} we show that by adding
such edges we can guarantee that the reward of the agent in the new
graph will be at least $\frac{1}{b n}$ of the reward of the
optimal agent in the original graph.  This corresponds to an 
exponential improvement in the worst-case guarantee.
However, there are instances
in which adding zero-reward edges cannot
increase the agent's reward beyond this factor.

% \item 
\textbf{3. Deleting edges.} 
Finally, we observe that in Figure \ref{fig:soph_reward_ratio},
if we delete the edge $(v_3,t)$ the agent will
traverse the optimal path. Using edge deletion to model commitment
devices is both powerful (as we will see next) and captures standard
commitment devices which are used in practice such as deadlines
(putting a deadline on some task ``deletes'' the option to do the task on
any time after the deadline). In the next subsection we show that by
deleting at most an $\eps$ fraction of the edges we can reduce the
reward ratio by an exponential amount in the worst case,
bringing the sophisticated agent's performance on the
modified graph to within a factor of $\Omega(n^{-\eps/2})$ of the
optimal in the original graph.

% \end{enumerate}

\subsection{Bounding Edge Removal}
Based on the final commitment device considered above --- deleting edges ---
we consider the following problem: Given a graph $G$, remove some fraction of the edges it to improve the reward that a sophisticated agent will collect. 
Here we show that by removing an arbitrarily small
constant fraction of the edges, 
we can increase the agent's reward in the modified graph to be within a polynomial factor of the optimal reward in the original graph:

%Specifically, let $R_s'(s)$ be the reward received by the sophisticated agent in the modified graph. We will show that by removing a constant fraction of the edges, we can produce a subgraph $G'$ such that $R_s'(s)$ is within a polynomial factor of $R^*$, the reward obtained by the optimal agent in $G$. %, even though $R^*$ can be exponentially larger than $R_s(s)$.

%If we want to ensure that the reward ratio is within a factor of $c$, it is not
%enough to remove all non-optimal outgoing edges from nodes along the optimal
%path that have reward ratio in the interval $[c,bc]$ Consider the graph in
%Figure~\ref{fig:ratio_down} and an agent with $b=2$ and $c=3$. The edges marked
%in blue are the ones that the sophisticated agent would take, and the vertices
%in red are the vertices for which the reward ratio, $g(v_i)$, is in the interval
%$[3,6]$. Initially, $g(s) = 12/1.5 = 8$. After removing the edges $(v_1,t)$ and
%$(v_5,t)$, the graph is as shown in Figure~\ref{fig:ratio_down2}, in which
%$R_s'(s) = 3.6$. Thus, the reward ratio is now $g'(s) = 12/3.6 = 3.33 > 3$.
\begin{theorem} \label{thm:edge-deletion}
For any DAG $G$ with $|V| > b$ and any $k>2$ we can remove $\frac{2|E|}{k}$ edges such that the reward of a sophisticated agent in the modified instance (denoted by $R_s'(s)$) is a factor of at least $n^{-k}$ times the optimal reward in the original instance (denoted by $R^*$). That is, $R_s'(s) \ge R^* n^{-k}$.
\end{theorem}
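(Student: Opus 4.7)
The plan is to apply a geometric layering of the vertex set, based on the optimal reward-to-go function $R_o(\cdot)$, together with a pigeonhole argument that isolates a thin slice of edges whose removal forces the sophisticated agent onto a near-optimal path. The motivating intuition comes from the fan example of Figure~\ref{fig:soph_reward_ratio}: sophistication fails there because modest immediate rewards dominate the backward recursion, and by surgically deleting a small fraction of shortcut edges we should be able to prevent this collapse while staying within the $2|E|/k$ edge budget.

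Concretely, after normalizing so that $R^* = 1$, I would partition the vertices into bands $L_i = \{v : R_o(v) \in [n^{-(i+1)}, n^{-i})\}$ for $i = 0, 1, \ldots, k-1$, with any vertex of even smaller $R_o$-value dumped into a trailing ``trash'' band. Classifying each edge by the band containing its tail gives a partition $E = E_0 \sqcup \cdots \sqcup E_{k-1}$, and an averaging argument over the consecutive pairs yields indices $i, i+1$ with $|E_i| + |E_{i+1}| \leq 2|E|/k$, which I would delete. Structurally, this cuts off all outgoing edges from the two-band window, so any surviving $s$--$t$ path must either remain above the window in $L_0 \cup \cdots \cup L_{i-1}$ or jump past it into $L_{i+2} \cup \cdots \cup \{t\}$. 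I would then argue, by a reverse-topological induction analogous in spirit to Theorem~\ref{thm:full}, that the sophisticated agent on the pruned graph ends up on a path whose total reward is at least $R^* n^{-k}$, either because it descends a surviving prefix of the optimal path $P^*$ and leaps past the window, or because it terminates by taking a direct-edge reward in a band $L_j$ with $j \leq k-1$, guaranteeing a reward of at least $R^* n^{-k}$.

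The main obstacle is overcoming the non-monotonic behavior of sophisticated reward-seeking --- already visible in Figure~\ref{fig:non-monotone} in the cost setting --- which means that deleting edges does not automatically improve the agent's chosen path, since local changes cascade through the backward recursion in sometimes surprising ways. The technical heart will be showing that after the two-band deletion, every temptation the sophisticated agent might prefer over a surviving near-optimal continuation still leads to a vertex with $R_o$-value at least $R^* n^{-k}$, so the backward recursion cannot steer the agent onto a path of exponentially smaller reward. Making this case analysis watertight, especially at the boundary between the preserved region and the deleted window, and cleanly handling the pigeonhole edge cases (e.g.\ ensuring $s$ is not stranded and that $t$ remains reachable), is where I expect the bulk of the technical work to lie.
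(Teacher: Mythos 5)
There is a genuine gap: your decomposition keys on the wrong quantity. You band \emph{vertices} by their optimal reward-to-go $R_o(\cdot)$ and classify edges by the band of their tail, but the pathology of sophisticated reward-seeking is driven by the magnitude of individual \emph{edge rewards}, not by where vertices sit in the $R_o$-landscape. The motivating fan of Figure~\ref{fig:soph_reward_ratio} is already a counterexample to your scheme: there $R_o(s) = R_o(v_1) = \cdots = R_o(v_n) = c^n = R^*$, so after normalization every vertex except $t$ lies in the single band $L_0$, all edges land in $E_0$, every other $E_j$ is empty, the pigeonhole selects an empty window, and you delete nothing --- yet the agent still goes straight from $s$ to $t$ and suffers an exponential reward ratio. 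A second, independent problem is your closing step: arriving at (or being tempted toward) a vertex $v$ with $R_o(v) \ge R^* n^{-k}$ guarantees nothing about what the agent actually collects, since $R_s(v)$ can be exponentially smaller than $R_o(v)$ --- that is precisely the phenomenon the theorem is trying to repair, so it cannot be invoked as the base of the argument.

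The paper instead buckets \emph{edges by their own reward} into the $k$ overlapping geometric intervals $(R^*n^{-j-1}, R^*n^{-j+1})$ for $j = 2, \dots, k$; each edge falls in at most two, so some interval contains at most $2|E|/k$ edges, and those (off the optimal path $P_o$) are deleted. The point of carving this multiplicative gap into the edge-reward spectrum is that it makes the induction along $P_o$ go through: at each vertex $u_i$ of $P_o$, the edge the agent actually prefers either has reward at least $R^*n^{-j+1}$ (in which case the agent already collects enough on that single step), or has reward at most $R^*n^{-j-1}$, in which case the inequality $b\, r(u_i,v_s) + R_s(v_s) \ge b\, r(u_i,v_o) + R_s(v_o)$ shows the loss relative to following $P_o$ is at most $(b-1)\, r(u_i,v_s) \le (b-1) R^* n^{-j-1}$ per step; over at most $n$ steps and using $n > b$, the accumulated loss is below $R^*n^{-j+1} - R^*n^{-j}$, yielding $R_s'(s) \ge R^* n^{-j} \ge R^* n^{-k}$. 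If you want to salvage your outline, the fix is to replace the vertex bands with this edge-reward banding; the pigeonhole and the induction-along-the-optimal-path skeleton you describe are otherwise the right shape.
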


\begin{proof}
We begin with some intuition for the proof. The basic idea is that at
each vertex where the agent chooses to follow a path other than the
optimal path, the ratio between the perceived reward on the path the
agent takes and on the optimal path is at most $b$. This suggests that
by removing all edges with rewards in an interval $(b^j,b^{j+1})$ we
can guarantee that if the optimal path is of weight at least $b^{j+1}$,
then the sophisticated agent will not choose a path of weight less
than $b^{j}$. Stated this simply, the 
reasoning works for fan graphs as in
Figure \ref{fig:soph_reward_ratio}, but not in general, since
in fan graphs the fact that
the optimal path has a reward greater than $b^{j+1}$ in particular
implies that there is an edge with at least this reward. 
For more
general graphs, we need to be more careful and remove edges admitting
rewards in a larger interval to make sure that there are no paths with
rewards in this interval. This idea is formalized in the next claim

\begin{claim} \label{clm:int_bound}
  For any DAG $G$ with optimal reward $R^* = R_o(s)$ and $n = |V| > b$, if no
  edge outside of the optimal $s-t$ path $P_o$ has reward within the interval $(R^*n^{-j-1}, R^* n^{-j+1})$ for $j > 1$, then $R_s(s) \ge R^* n^{-j}$.
\end{claim}
\begin{proof}
  We name the vertices of the path $P_o$ by their order from $t$ to $s$. That is $u_0=t$, $u_1$ is the vertex closest to $t$ on the path and so on. We will show by induction on the vertices of the path $P_o$ that for each vertex $u_i \in P_o$, we have $R_s(u_i) \ge \min(R_o(u_i),R^*n^{-j+1}) - R^*(i-1)(b-1)n^{-j-1}$.
  
  \textbf{Base case:} At $t$, we have $R_s(t) \ge R_o(t) = 0$.
    
  \textbf{Inductive hypothesis:} For the vertex $u_{i-1}$ on $P_o$, we have
\\  $R_s(u_{i-1}) \ge \min(R_o(u_{i-1}),R^*n^{-j+1}) - R^*(i-1)(b-1)n^{-j-1}.$

  \textbf{Inductive step:} Consider the vertex $u_i$ on $P_o$. Let $v_s$ be the next vertex that the agent visits and let $v_o$ be the next vertex along $P_o$.

  If $r(u_i,v_s) \ge R^* n^{-j+1}$, then the claim is trivially true because
  $R_s(u_i) \ge r(u_i,v_s) \ge R^*n^{-j+1}$. Therefore, we will only consider
  $r(u_i,v_s) \le R^* n^{-j-1}$ (it cannot be between those values by assumption). 
  
  Because the agent chooses to go to $v_s$ instead of $v_o$, we have
  $b \cdot r(u_i,v_s) + R_s(v_s) \ge b \cdot  r(u_i,v_o) + R_s(v_o) \ge r(u_i, v_o) +
  \min(R_o(v_o), R^* n^{-j+1}) - R^*(i-1)(b-1)n^{-j-1}.$
  Subtracting $(b-1) r(u_i, v_s)$ from both sides and using the assumption that
    $r(u_i,v_s) \le R^* n^{-j-1}$, we have
    \begin{align*}
      r(u_i,v_s) + R_s(v_s) &\ge r(u_i,v_o) + \min(R_o(v_o), R^* n^{-j+1}) -
      R^*(i-1)(b-1)n^{-j-1} - (b-1)
      r(u_i,v_s) \\
      &\ge \min(r(u_i, v_o)+R_o(v_o), R^* n^{-j+1}) - R^*i(b-1)n^{-j-1}
    \end{align*}
    Thus, the inductive step holds.

%  \textbf{Case 1:} $R_o(v_o) \ge R^* n^{-j+1}$. Then, we have
%  $$b \cdot r(u,v_s) + R_s(v_s) \ge R^* n^{-j+1} - R^*(k-1)(b-1)n^{-j-1}$$
%  Subtracting $(b-1) r(u, v_s)$ from both sides and using the assumption that
%  $r(u,v_s) \le R^* n^{-j-1}$, we have
%  \begin{align*}
%    r(u,v_s) + R_s(v_s) &\ge R^* n^{-j+1} - R^*(k-1)(b-1)n^{-j-1} - (b-1)
%    r(u,v_s) \\
%    &\ge R^*n^{-j+1}-R^*(k-1)(b-1)n^{-j-1}-(b-1)R^*n^{-j-1} \\
%    &= R^*n^{-j+1} - R^*k(b-1)n^{-j-1}
%  \end{align*}
%  If $R_o(v_o) \ge R^* n^{-j+1}$, then $R_o(u) \ge R^* n^{-j+1}$. Thus, $R^*
%  n^{-j+1} = \min(R_o(u), R^* n^{-j+1})$ so the inductive step holds.
%
%  \textbf{Case 2:} $R_o(v_o) \le R^* n^{-j+1}$. Then, we have
%  $$b \cdot r(u,v_s) + R_s(v_s) \ge r(u, v_o) + R_o(v_o) -
%  R^*(k-1)(b-1)n^{-j-1} = R_o(u) - R^*(k-1)(b-1)n^{-j-1}$$
%  Subtracting $(b-1) r(u, v_s)$ from both sides and using the assumption that
%  $r(u,v_s) \le R^* n^{-j-1}$, we have
%  \begin{align*}
%    r(u,v_s) + R_s(v_s) &\ge R_o(u) - R^*(k-1)(b-1)n^{-j-1} - (b-1) r(u,v_s) \\
%                    &\ge R_o(u) - R^*(k-1)(b-1)n^{-j-1} - (b-1)R^*n^{-j-1} \\
%    R_s(u) &= R_o(u) - R^*k(b-1)n^{-j-1}
%  \end{align*}
%  Since $R_s(u) \ge R_o(u) - R^*k(b-1)n^{-j-1}$, we know $R_s(u) \ge
%  \min(R_o(u),R^*n^{-j+1}) - R^*k(b-1)n^{-j-1}$. 
  Finally, since $P_o$ has at most $n$ edges between $s$ and $t$, this means $R_s(s)
  \ge \min(R_o(s), R^* n^{-j+1}) - R^* n(b-1)n^{-j-1}= \min(R^*, R^* n^{-j+1}) - R^* n(b-1)n^{-j-1} =  R^*(n^{-j+1} - (b-1)
  n^{-j}) \ge R^* n^{-j}$.
\end{proof}

We can now use Claim \ref{clm:int_bound} to complete the proof of Theorem \ref{thm:edge-deletion}. Note that there are at most $|E|$ edges with reward in the interval $(R^*n^{-k-1},
  R^*)$, which we can divide into the overlapping intervals
 \\ $(R^*n^{-k-1},R^*n^{-k+1}), (R^*n^{-k},R^*n^{-k+2}),
  (R^*n^{-k+1},R^*n^{-k+3}), \dots, (R^*n^{-2},R^*).$ Each edge can fall into at
  most two such intervals. There are $k$ such intervals. Thus, the average
  interval has at most $\frac{2|E|}{k}$ edges in it. This means that there is
  some interval with at most $\frac{2|E|}{k}$ edges in it. Let this interval be
  $(R^*n^{-j-1}, R^*n^{-j+1})$ with $j \le k$. To produce the modified graph $G'$, we remove all edges from this interval that do not lie on the optimal path. Using Claim~\ref{clm:int_bound}, we know that if $G'$ has no edges outside the optimal path with reward in the interval $(R^*n^{-j-1}, R^*n^{j+1})$, then $R_s'(s) \ge R^* n^{-j} \ge R^* n^{-k}$.
\end{proof}

\section{Conclusion}
In this work we incorporated the notion of sophistication
into a graph-theoretic model of planning, and thereby provided
a set of new guarantees about the performance of sophisticated
agents with present bias.
Our formalism makes it possible to identify basic new insights about
the behavior of such agents. These include the fact that 
their performance is at most a factor of $b$ worse than optimal;
that the worst-case performance of partially naive agents displays
a sharp transition at the boundary between optimism and pessimism;
that rewards for motivating sophisticated agents can display
a surprising form of non-monotone behavior;
and that any instance for a reward-seeking sophisticated
agent with a bad performance ratio
can be dramatically improved by deleting small
fraction of the instance.

Our work leaves open a number of interesting questions.
We have identified some specific questions in the text thus far,
and highlight the following additional ones here.
First, can we say anything about the average-case performance
of sophisticated and partially naive agents on ensembles of
instances defined by a natural family of random graphs?
It is possible that particular models might exhibit contrasts
at the level of average-case performance even though they do not
at the level of worst-case performance.
Second, can we characterize the set of paths traversed by a 
sophisticated agent on a given graph as we vary the value of $b$?
Such a characterization might yield a bound on the number of possible
paths over all values of $b$, or allow us to analyze agents whose
belief about their value of $b$ is drawn from a distribution
rather than pinned to a specific value.
And finally, it would be interesting to understand the power of
commitment devices more fully, both through the development of
other plausible families of commitment devices, and through tighter
bounds on the ones studied here.

\EC{
\bibliographystyle{ACM-Reference-Format-Journals}
}

\arxiv{
\bibliographystyle{plain}
}
\bibliography{refs}

\newpage
\begin{appendix}
%\section{A Comparison Between Naive and Sophisticated Agents}  \label{app:naive}
%\input{naive-vs-soph}
\section{Partially Naive Agents} \label{app:partial-naive}
A partially naive agent is an agent that has present bias $b$ but
believes it has a bias $b'$. In the case where $b = b'$, the agent is fully
sophisticated. In the case where $b' = 1$, the agent is naive.

Let $C_p(u)$ be the cost for a partially naive agent with parameters
$b$ and $b'$ to go from $u$ to the goal $t$. Let $C_s'(u)$ be the cost for a
fully sophisticated agent with parameter $b'$ to go from $u$ to the goal $t$.
%The decision rule for a partially naive agent is
%$$S_p(u) = \argmin_{v \in N(u)} b \cdot c(u, v) + C_s'(v)$$
%and the cost is
%$$C_p(u) = c(u, S_p(u)) + C_p(S_p(u))$$
The cost of the path that a partially naive agent  takes is:
\begin{equation*} %\label{eq:successor}
  C_p(u) = c(u, S_p(u)) + C_p(S_p(u)) \text{ where } S_p(u) = \argmin_{v : (u,v)
  \in E} b \cdot c(u, v) + C_s'(v).
\end{equation*}

We will consider two types of partially sophisticated agents: optimistic ($b' <
b$) and pessimistic ($b' > b$). 

\subsection{Optimistic Partially Naive}
The path taken by an optimistic partially naive agent can have cost
ratio $b^{O(n)}$ where $n$ is the number of vertices. In the graph shown in
Figure~\ref{fig:optpart}, at each vertex $v_i$, the agent has a choice between
going directly to $t$, which has a perceived cost of $b^{i+1}$, or proceeding to
$v_{i+1}$. Since the agent believes it has bias $b'$, it believes that at
$v_{i+1}$, it will proceed to $t$ since $b' b^{i+1} < b^{i+2}$. Therefore, the
perceived cost of the path $v_i \to v_{i+1} \to t$ is $b^{i+1}$. If we choose
the tie breaking scheme to favor $v_{i+1}$ over $v_i$ (or subtract $\varepsilon$
from the weights as necessary), we can get the agent to follow the path $s \to
v_1 \to v_2 \to v_3 \to v_4 \to t$, which has cost $b^4$, instead of the optimal
path $s \to t$ which has cost 1. It is easy to see that the same analysis on a fan of size $n$ would show that its cost ratio is $b^{O(n)}$.
\begin{figure}[ht]
  \centering
  \begin{tikzpicture}[->,shorten >=1pt,auto,node distance=2cm, thin]
    \node (0) [circle, draw, minimum size=1cm] at (0,0) {$s$};
    \node (1) [circle, draw, minimum size=1cm] at (1,2) {$v_1$};
    \node (2) [circle, draw, minimum size=1cm] at (3,3) {$v_2$};
    \node (3) [circle, draw, minimum size=1cm] at (5,2) {$v_3$};
    \node (4) [circle, draw, minimum size=1cm] at (6,0) {$v_4$};
    \node (5) [circle, draw, minimum size=1cm] at (3,0) {$t$};
    
    \path
    (0) edge node {0} (1)
    (1) edge node {0} (2)
    (2) edge node {0} (3)
    (3) edge node {0} (4)

    (0) edge node {1} (5)
    (1) edge node {$b$} (5)
    (2) edge node {$b^2$} (5)
    (3) edge node {$b^3$} (5)
    (4) edge node {$b^4$} (5)
    ;
  \end{tikzpicture}
  \caption{Cost ratio $b^{O(n)}$ for optimistic partially sophisticated agent}
  \label{fig:optpart}
\end{figure}
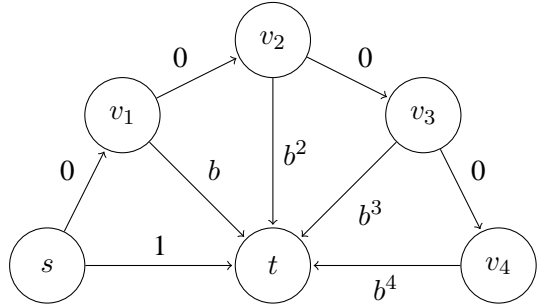
\subsection{Pessimistic Partially Sophisticated}
\begin{claim}
  The path taken by a pessimistic partially sophisticated agent can have cost
  ratio at most $b'$.
\end{claim}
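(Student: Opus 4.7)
The plan is to show the stronger intermediate fact that $C_p(u) \le C_s'(u)$ for every vertex $u$, and then invoke Theorem~\ref{thm:full} with the parameter $b'$ to conclude $C_s'(u) \le b' \cdot C_o(u)$. Composing these two bounds at $u = s$ gives $C_p(s) \le b' \cdot C_o(s)$, which is exactly the desired cost-ratio statement. Intuitively, the pessimistic agent plans its future exactly as a sophisticated agent with parameter $b'$ would, so the only difference between the two is the weight placed on the immediately next edge; since the actual weight $b$ is smaller than the planning weight $b'$, our hope is that this ``undercharging'' of the present can only help.

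To prove the intermediate claim by induction on the height of $u$, the key comparative observation is that at every vertex $u$, the partially naive successor $v_p = \arg\min_v b\,c(u,v) + C_s'(v)$ and the sophisticated-$b'$ successor $v_{s'} = \arg\min_v b'\,c(u,v) + C_s'(v)$ satisfy $c(u,v_p) \ge c(u,v_{s'})$. This falls out by adding the two minimization inequalities and using $b' > b$. Intuitively, the smaller weight on the present step makes the agent more willing to accept a larger immediate cost in exchange for a smaller predicted future cost.

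For the inductive step, starting from the partial agent's optimality inequality $b\,c(u,v_p) + C_s'(v_p) \le b\,c(u,v_{s'}) + C_s'(v_{s'})$, I rearrange to
\[
c(u,v_p) + C_s'(v_p) \;\le\; c(u,v_{s'}) + C_s'(v_{s'}) + (b-1)\bigl(c(u,v_{s'}) - c(u,v_p)\bigr).
\]
By the comparative observation above, the parenthesized quantity is nonpositive, and since $b > 1$ the correction term is $\le 0$. Hence $c(u,v_p) + C_s'(v_p) \le c(u,v_{s'}) + C_s'(v_{s'}) = C_s'(u)$. Applying the inductive hypothesis $C_p(v_p) \le C_s'(v_p)$ gives $C_p(u) = c(u,v_p) + C_p(v_p) \le c(u,v_p) + C_s'(v_p) \le C_s'(u)$, closing the induction. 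The base case (height $1$) is immediate, as both quantities equal $C_o(u)$.

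The main obstacle is the intermediate claim itself: a priori it is not obvious that a less biased agent saddled with a more biased agent's plan should be no worse off than the more biased agent, because the two agents can disagree about the very first edge and then end up on genuinely different trajectories. The comparative observation $c(u,v_p) \ge c(u,v_{s'})$ is what tames this divergence, letting us compare $c(u,v_p) + C_s'(v_p)$ to $C_s'(u)$ directly. Once the intermediate claim is in hand, the final bound is a one-line consequence of Theorem~\ref{thm:full} applied with bias parameter $b'$.
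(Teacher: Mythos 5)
Your proposal is correct and follows essentially the same route as the paper: both prove the intermediate claim $C_p(u) \le C_s'(u)$ by induction on height, derive $c(u,v_p) \ge c(u,v_{s'})$ by combining the two optimality inequalities for $b$ and $b'$, use this together with $b>1$ to get $c(u,v_p) + C_s'(v_p) \le C_s'(u)$, and then conclude via Theorem~\ref{thm:full} applied with parameter $b'$. The only differences are cosmetic (the explicit correction-term rearrangement and the choice of base case).
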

\begin{proof}
  We use induction on the height of the vertex to show that  $C_p(u) \le C_s'(u)$ for every node $u$ in the graph. Recall that $C_s'(u) \le b' \cdot C_o(u)$ by Theorem~\ref{thm:full}. As this holds for $s$ in particular we have that $C_p(s) \le b' \cdot C_o(s)$, as required. We now provide the inductive argument showing that $C_p(u) \le C_s'(u)$.
 
  \textbf{Base case:} $C_p(t) = 0$, $C_s'(t) = 0$.

  \textbf{Inductive hypothesis:} For a vertex $u$ of height $k$, $C_p(u) \le
  C_s'(u)$.

  \textbf{Inductive step:} For a vertex $u$ of height $k+1$,
  let $v_p = S_p(u)$ and $v_{s'} = S_s'(u)$. %$x = c(u, v)$, and $y = c(u, w)$. 
  Since the partially sophisticated agent chose to go to $v_p$ and the fully sophisticated
  agent chose to go to $v_{s'}$, we know
    \begin{align*}
      b\cdot c(u,v_p) + C_s'(v_p) &\le b\cdot c(u,v_{s'}) + C_s'(v_{s'}) \\
      b' \cdot c(u,v_p) + C_s'(v_p) &\ge b' \cdot c(u,v_{s'}) + C_s'(v_{s'})
    \end{align*}
  Subtracting one equation from the other, we have $ (b'-b) \cdot c(u,v_p) \ge (b'-b) \cdot c(u,v_{s'})$ implying that $c(u,v_p) \ge c(u,v_{s'})$.
  
  Since $ b\cdot c(u,v_p) + C_s'(v_p) \le b\cdot c(u,v_{s'}) + C_s'(v_{s'})$ and $c(u,v_p) \ge c(u,v_{s'})$, we have that $ c(u,v_p) + C_s'(v_p) \le c(u,v_{s'}) + C_s'(v_{s'})$. By the induction hypothesis, we have that $C_p(v_{p}) \le
  C_s'(v_{p})$. This implies that $c(u,v_p) + C_p(v_p) \le C_s'(u)$, and hence $C_p(u) \le
  C_s'(u)$.
\end{proof}

The example in Figure~\ref{fig:pessimist} shows that this bound is tight.
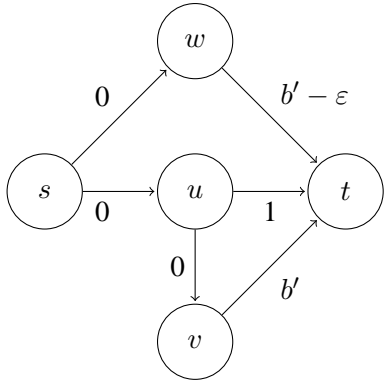
\begin{figure}[ht]
  \centering
  \begin{tikzpicture}[->,shorten >=1pt,auto,node distance=2cm, thin]
    \node (s) [circle, draw, minimum size=1cm] at (0,0) {$s$};
    \node (u) [circle, draw, minimum size=1cm] at (2,0) {$u$};
    \node (v) [circle, draw, minimum size=1cm] at (2,-2) {$v$};
    \node (w) [circle, draw, minimum size=1cm] at (2,2) {$w$};
    \node (t) [circle, draw, minimum size=1cm] at (4,0) {$t$};

    \path
    (s) edge node [below left] {0} (u)
    (s) edge node {0} (w)
    (u) edge node [below] {1} (t)
    (u) edge node [left] {0} (v)
    (v) edge node [below right] {$b'$} (t)
    (w) edge node {$b' - \varepsilon$} (t)
    ;
  \end{tikzpicture}
  \caption{Worst case scenario for a pessimistic partially sophisticated agent}
  \label{fig:pessimist}
\end{figure}

\subsection{Naive Future-biased Agent}
The following result is about naive agents and not sophisticated ones. However, since the proof follows a similar path to the proofs for the bounds on the cost ratio for sophisticated agents we include it in this paper.

A future-biased agent is a naive agent that has bias $b < 1$. Let $C_f$ be the perceived cost function
for a future-biased agent. The decision rule is the same for any naive agent:
$$S_f(u) = \argmin_{v : (u,v) \in E} b \cdot c(u, v) + C_o(v)$$
\begin{claim} \label{clm:fut}
  The cost ratio for a future-biased agent is at most $\frac{1}{b}$.
\end{claim}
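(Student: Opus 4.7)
The plan is to mirror the inductive structure used in the proof of Theorem~\ref{thm:full}, doing induction on the height of a vertex (its longest unweighted distance to $t$), and establishing the slightly stronger pointwise claim that $C_f(u) \le \tfrac{1}{b} C_o(u)$ for every node $u$. Applied to $s$ this yields the stated bound on the cost ratio. The base case is immediate: at height $1$, the agent has a single edge decision and simply picks the minimum-cost edge to $t$, so $C_f(u)=C_o(u)\le \tfrac{1}{b}C_o(u)$ since $b<1$.

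For the inductive step, fix a node $u$ of height $k+1$ and let $v_f = S_f(u)$ and $v_o = S_o(u)$. The decision rule for the naive future-biased agent at $u$ gives
\[
b\cdot c(u,v_f) + C_o(v_f) \le b\cdot c(u,v_o) + C_o(v_o),
\]
since the agent (falsely) believes future selves will achieve $C_o$. Because $b<1$ and all costs are nonnegative, $b\cdot c(u,v_o) \le c(u,v_o)$, so the right-hand side is at most $c(u,v_o) + C_o(v_o) = C_o(u)$. This gives $b\cdot c(u,v_f) + C_o(v_f) \le C_o(u)$. Dividing by $b$ and using the inductive hypothesis $C_f(v_f) \le \tfrac{1}{b}C_o(v_f)$,
\[
C_f(u) = c(u,v_f) + C_f(v_f) \le c(u,v_f) + \tfrac{1}{b}C_o(v_f) \le \tfrac{1}{b}C_o(u),
\]
closing the induction.

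There isn't really a hard step here; the only thing one has to be careful about is that the direction of the slack in the comparison is opposite to the $b>1$ case of Theorem~\ref{thm:full}. In the present-biased case the $b$ factor penalizes the immediate edge and one pays a factor $b$ on the future tail; here the $b<1$ factor discounts the immediate edge, which is exactly what lets one absorb the worst-case mismatch between $v_f$ and $v_o$ into the $1/b$ blowup on the tail. After the theorem it would also be natural to remark that the bound is tight, e.g.\ via the same one-fan instance as in Figure~\ref{fig:one-fan} with $c$ chosen slightly above $1/b$: the agent prefers the $s\to v_1\to t$ path of true cost $c$ over the direct edge of true cost $1$, giving a ratio approaching $1/b$.
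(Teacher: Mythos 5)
Your proof of the claim itself is correct and follows essentially the same induction on vertex height as the paper's, differing only in the cosmetic order in which the inequality $b\,c(u,v_o)\le c(u,v_o)$ and the inductive hypothesis are applied. However, your closing tightness remark is wrong: in the one-fan of Figure~\ref{fig:one-fan} with $b<1$ and $c$ slightly above $1/b$, the agent perceives the direct edge as $b$ and the deferred path as $c>1>b$, so it takes the direct edge and achieves the optimal cost of $1$ --- a future-biased agent errs by taking an overly expensive \emph{immediate} edge (which it discounts), not an expensive deferred one, and the correct tight instance is the paper's Figure~\ref{fig:future}, where the edge of cost $1/b$ out of $s$ is perceived as $1$ and chosen over the truly cheaper two-step path of cost $1$.
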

\begin{proof}
  We proceed by induction on the height of the vertex.

  \textbf{Base case:} $C_f(t) = 0$, $C_o(t) = 0$.

  \textbf{Inductive hypothesis:} For a vertex $u$ of height $k$, $C_f(u) \le
  \frac{1}{b} C_o(u)$.

  \textbf{Inductive step:} Consider a vertex $u$ of height $k+1$. Let $v_f =
  S_f(u)$ and $v_o = S_o(u)$. Since the future-biased agent chose to go to $v_f$, we have that $b \cdot c(u,v_f) + C_o(v_f) \le b \cdot c(u,v_o) + C_o(v_o)$. Using the
  induction hypothesis, which says that $C_o(v_f) \ge b \cdot C_f(v_f)$, and because
  $b < 1$, we have that $b \cdot c(u,v_f) + b \cdot C_f(v_f) \le C_o(u)$ which
  implies that $b \cdot C_f(u) \le C_o(u)$ as required.
\end{proof}

\noindent
The example in Figure~\ref{fig:future} shows that this bound is tight.
\begin{figure}[ht]
  \centering
  \begin{tikzpicture}[->,shorten >=1pt,auto,node distance=2cm, thin]
    \node (s) [circle, draw, minimum size=1cm] at (0,0) {$s$};
    \node (v) [circle, draw, minimum size=1cm] at (1.5,1.5) {$v$};
    \node (t) [circle, draw, minimum size=1cm] at (3,0) {$t$};

    \path
    (s) edge node [above] {$\frac{1}{b}$} (t)
    (s) edge node [above left] {0} (v)
    (v) edge node [above right] {1} (t)
    ;
  \end{tikzpicture}
  \caption{Worst case scenario for a future-biased agent}
  \label{fig:future}
\end{figure}
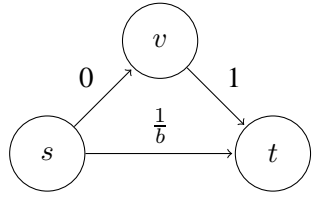

\section{Traversable Rewards}

\EC{
\subsection{Upper Bound for Traversable Rewards} \label{app:rewardratio}
We prove
Theorem~\ref{thm:rewardratio}, which states that the ratio between the
minimal reward needed at $t$ to motivate an agent to traverse $G$ and $C_o(s)$
is at most $b$.

}

\subsection{Exponential Traversable Paths} \label{app:exp-paths}
Here, we provide a proof that the graph $G$ shown in Figure~\ref{fig:exp_paths}
has exponentially many distinct paths taken by the agent for various values of
the reward placed at $t$. Recall that for any path in $G$, we define the
bit string $x$ whose $i$th bit is 0 if the path contains the direct $v_i \to
v_{i+1}$ edge and 1 if the path contains the edges $v_i \to w_i \to v_{i+1}$.

Number the bits representing $x$ from 0 (least significant bit) to $n-1$ (most
significant bit). For a number $x$ such that $0 \le x \le 2^n - 1$, let $Q = \{i
~ | ~ \text{the $i$th bit of $x$ is a 1}\}$. Proceeding by induction from $n-1$
to 0, we will show that for a reward of $2^n + (bc-2)x$, if $i \in Q$, then
agent would be willing to go through $w_i$, and if $i \notin Q$, then the agent
would be willing to go directly to $v_{i+1}$ from $v_i$ but not through $w_i$.

Let $x_i$ be the $i$-truncated version of $x$, meaning that all bits strictly
less significant than the $i$th bit are set to 0.

\textbf{Case 1:} $i \in Q$. Then, the perceived cost of going through $w_i$ is
\begin{align*}
  b \cdot c \cdot 2^i + \sum_{i<j<n,j \in Q} c \cdot 2^j + \sum_{i<j<n,j \notin
  Q} 2^j
  &= b \cdot c \cdot 2^i + \sum_{i<j<n} 2^j + \sum_{i<j<n,j \in Q} (c-1) 2^j \\
  &= b \cdot c \cdot 2^i + 2^n - 2^{i+1} + (c-1)x_{i+1} \\
  &= 2^n + (bc-2) 2^i + (bc-2)x_{i+1} \tag{because $1+c=bc$} \\
  &= 2^n + (bc-2) x_i \\
  &\le 2^n + (bc-2) x
\end{align*}
Therefore, if $i \in Q$, then the agent is willing to go to $w_i$. Because $c
\le b$ (and we can break ties by always choosing $w_i$ over $v_{i+1}$), at
$v_i$, the agent will choose to go to $w_i$.

\textbf{Case 2:} $i \notin Q$. Then, the perceived cost of going through $w_i$
is again $2^n + (bc-2) 2^i + (bc-2)x_{i+1}$. However,
\begin{align*}
  x &= \sum_{0 \le j \le i, j \in Q} 2^j + x_{i+1} \\
    &= \sum_{0 \le j < i, j \in Q} 2^j + x_{i+1} \\
    &\le \sum_{j=0}^{i-1} 2^j + x_{i+1} \\
    &< 2^i + x_{i+1}
\end{align*}
Therefore, the perceived cost of going through $w_i$ is
$$2^n + (bc-2)(2^i+x_{i+1}) > 2^n + (bc-2) x$$
and thus, the agent would not be willing to go through $w_i$. The perceived cost
of going directly from $v_i$ to $v_{i+1}$, however, is
\begin{align*}
  b \cdot 2^i + 2^n - 2^{i+1} + (c-1)x_{i+1}
  &= 2^n + (b-2) 2^i + (bc-2) x_{i+1} \\
  &< 2^n + (bc-2) x \tag{because $b < 2$}
\end{align*}

In either case, the agent follows the path corresponding to the binary
representation of $x$. Thus, by induction, for a reward of $2^n + (bc-2) x$, the
agent must follow this path.

Since every number between $0$ and $2^n-1$ can be represented by paths in this
graph, under these conditions, there are an exponential number of different
paths that the agent takes as the reward varies.

It is interesting to note that the only way we can have $b = c$ is to set them
both to the golden ratio $\phi$.

\EC{
\subsection{Edge Deletion} \label{app:deletion}
Proof of Claim~\ref{clm:single-path}, which states that the minimal motivating
subgraph of a graph $G$ is a single path.

}

\subsection{Ratio between Internal Rewards and Edge Deletion}
\label{app:internal-ratio}

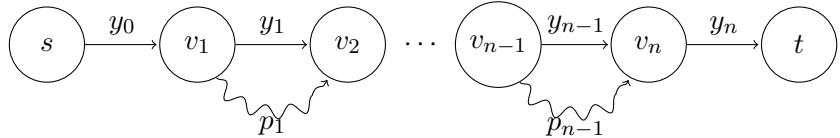
\begin{figure}[ht]
  \centering
  \begin{tikzpicture}[->,shorten >=1pt,auto,node distance=2cm, thin]
    \node (s) [circle, draw, minimum size=1cm] at (0,0) {$s$};
    \node (v1) [circle, draw, minimum size=1cm] at (2,0) {$v_1$};
    \node (v2) [circle, draw, minimum size=1cm] at (4,0) {$v_2$};
    \node (d) at (5,0) {$\cdots$};
    \node (vn1) [circle, draw, minimum size=1cm] at (6,0) {$v_{n-1}$};
    \node (vn) [circle, draw, minimum size=1cm] at (8,0) {$v_n$};
    \node (t) [circle, draw, minimum size=1cm] at (10,0) {$t$};

    \path
    (s) edge node {$y_0$} (v1)
    (v1) edge node {$y_1$} (v2)
    (vn1) edge node {$y_{n-1}$} (vn)
    (vn) edge node {$y_n$} (t)
    ;

    \path [->,decoration=snake]
    %(s) edge [decorate, bend right=60] node [below] {$b(1-\varepsilon)$} (v1)
    (v1) edge [decorate,bend right=60] node [below] {$p_1$} (v2)
    (vn1) edge [decorate,bend right=60] node [below] {$p_{n-1}$} (vn)
    %(vn) edge [decorate,bend right=60] node [below] {$b(1-\varepsilon)$} (t)
    ;
  \end{tikzpicture}
  \caption{$G$ such that $2R_d \approx R_i$}
  \label{fig:internal-ratio-redux}
\end{figure}

Consider the graph in Figure~\ref{fig:internal-ratio-redux}. Let $R_d$ be the minimal reward
required if the agent can delete edges from the graph, and let $R_i$ be the
minimal reward required if the agent can distribute the reward internally. Then,
we make the following definitions:
\begin{align*}
  y_i &= \p{\frac{b}{b-1}}^i \\
  p_i &= y_i + \frac{p_{i-1}}{2} \\
  p_1 &= \frac{(b+1)y_i}{2}
\end{align*}
The curved paths represent paths made up by a seriew of low-weight edges, such
that the perceived cost is equal (within a factor of $\varepsilon$) to the true
cost. From each $v_i$, the perceived cost of taking the direct edge to $v_{i+1}$
is $b y_i$, while the perceived cost of taking the longer path is $p_i$. If
deletion is allowed, then from each $v_i$ (and $s$), the perceived cost of the
optimal path with deletion is $R_d = b\p{\frac{b}{b-1}}^n$. However, by only
allowing internal reward placement, since the perceived cost of following the
$(v_n, t)$ edge is $b\p{\frac{b}{b-1}}^n$, meaning that the agent must place
that much reward just on the last edge. Consider each vertex inductively
backwards from $v_n$ -- in order to motivate the agent to get from $v_i$ to
$v_{i+1}$, we can either place enough reward on the direct edge to bring the
perceived cost down below $p_i$, or we can allow the agent to take the longer
bypass sequence of edges. However, if it takes the longer sequence, this adds
$p_i - y_i$ to the total cost to the path, meaning that the overall reward
required is increased by $p_i - y_i$. Since the reward required to bring the
perceived cost of following the optimal path down to $p_i$ is $b y_i - p_i$, we
choose $p_i$ so as to make the agent indifferent to both choices. Let $D_i$ be
the amount of additional reward needed to motivate the agent to get past $v_i$.
Then, we have $D_i = b y_i - p_i = p_i - y_i + D_{i-1} = \sum_{j=1}^i (p_i -
y_i)$.

First, we observe that if we place reward of $b y_i - p_i$ on the direct $(v_i,
v_{i+1})$ edge, then this is enough to motivate the agent to get past $v_i$.
This is because if the agent follows every bypass before $v_i$, it incurs an
additional cost of $\sum_{j=1}^{i-1} (p_i - y_i) = D_{i-1}$. Since by definition
$b y_i - p_i = D_i  = p_i - y_i + D_{i-1}$, the additional reward placed on the
optimal $(v_i, v_{i-1})$ edge is enough to motivate the agent to take all the
bypasses up to $v_{i-1}$. However, if instead we place a reward of $p_i - y_i$
on the bypass from $v_i$ to $v_{i+1}$, from any vertex before $v_i$, the
perceived cost of the path is the same as it would be if the agent took the
optimal $(v_i, v_{i+1})$ edge. Thus, inductively, the agent still has to be
motivated to get to $v_{i-1}$, which by induction takes an additional reward of
$D_{i-1}$. Therefore, we want the agent to be indifferent between motivating the
direct edge and motivating the bypass, meaning
\begin{align*}
  b y_i - p_i &= p_i - y_i + D_{i-1} \\
  (b+1) y_i &= 2p_i + b y_{i-1} - p_{i-1} \\
  (b+1) y_i &= 2p_i + b \p{\frac{b-1}{b}} y_i - p_{i-1} \\
  y_i \p{b+1 - (b-1)} + p_{i-1} &= 2p_i \\
  y_i + \frac{p_{i-1}}{2} &= p_i
\end{align*}
which matches the above definition. Furthermore,
\begin{align*}
  b y_1 - p_1 &= p_1 - y_1 \\
  (b+1) y_1 &= 2p_1 \\
  \frac{(b+1)y_1}{2} &= p_1
\end{align*}
Thus, we can explicitly solve for $p_i$:
\begin{align*}
  p_i &= \frac{(b+1)b}{(b-1)2^i} + \sum_{j=2}^i \p{\frac{b}{b-1}}^j 2^{j-i} \\
      &= \frac{(b+1)b}{(b-1)2^i} + 2^{-i} \p{\frac{2b}{b-1}} \sum_{j=1}^{i-1}
      \p{\frac{2b}{b-1}}^j \\
      &= \frac{(b+1)b}{(b-1)2^i} + 2^{-i} \p{\frac{2b}{b-1}}^2
      \p{\frac{\p{\frac{2b}{b-1}}^{i-1}-1}{\frac{2b}{b-1}-1}} \\
      &= \frac{(b+1)b}{(b-1)2^i} + 2^{-i+2} \p{\frac{b}{b-1}}^2
      \p{\frac{\p{\frac{2b}{b-1}}^{i-1}-1}{\frac{b+1}{b-1}}} \\
      &= \frac{(b+1)b}{(b-1)2^i} + \p{\frac{b^2}{2^{i-2}(b+1)(b-1)}}
      \p{\p{\frac{2b}{b-1}}^{i-1}-1} \\
\end{align*}
Using this, we have
\begin{align*}
  D_i &= b y_i - p_i \\
      &= b\p{\frac{b}{b-1}}^i - \frac{(b+1)b}{(b-1)2^i} -
      \p{\frac{b^2}{2^{i-2}(b+1)(b-1)}} \p{\p{\frac{2b}{b-1}}^{i-1}-1} \\
      &= \p{\frac{b}{b-1}}^i \p{b - \frac{2b}{b+1}} + \frac{b}{2^i}
      \p{\frac{4b}{b^2-1} - \frac{b+1}{b-1}} \\
      &= \p{\frac{b}{b-1}}^i \p{\frac{b^2-b}{b+1}} + \frac{b}{2^i}
      \p{\frac{-b^2+2b-1}{b^2-1}} \\
      &= \p{\frac{b}{b-1}}^i \p{\frac{b(b-1)}{b+1}} - \frac{b}{2^i}
      \p{\frac{b-1}{b+1}} \\
      &= \frac{b(b-1)}{b+1} \p{\p{\frac{b}{b-1}}^i - \frac{1}{2^i}}
\end{align*}
Since $D_{n-1}$ is the additional reward required to get the agent past the
first $n-1$ vertices, we have $R_i = b\p{\frac{b}{b-1}}^n + D_{n-1}$. This makes
the ratio
\begin{align*}
  \frac{R_i}{R_d} &= \frac{b\p{\frac{b}{b-1}}^n + D_{n-1}}{b\p{\frac{b}{b-1}}^n}
  \\
  &= 1 + \frac{\frac{b(b-1)}{b+1}\p{\p{\frac{b}{b-1}}^{n-1} -
  \frac{1}{2^{n-1}}}}{b\p{\frac{b}{b-1}}^n} \\
  &= 1 + \p{\frac{b-1}{b+1}}\p{\frac{b-1}{b}} - 2\p{\frac{b-1}{2b}}^n \\
  &= 1 + \p{\frac{(b-1)^2}{b(b+1)}} - 2\p{\frac{b-1}{2b}}^n
\end{align*}
For large enough $b$ and $n$, this ratio approaches 2.

\EC{
\section{Reward Seeking Behavior for Naive Agents} \label{app-naive-reward}

}
\section{Commitment Devices} 
\subsection{Planning Agents} \label{app:planning}

We consider the ``planning agent'' described in Section~\ref{sec:commitment}.
First, we observe that the agent would never place on a reward on an edge that
it would not traverse. Since the agent is sophisticated, it knows exactly which
path it will take through a graph, and thus the path would not be affected by
increasing the reward on an edge that the agent knows it will not take.

Let $R_o(s)$ and $R_s(s)$ denote the rewards achieved by the optimal and
sophisticated agents respectively through the graph. Let $R_p(s)$ denote the
reward achieved by a sophisticated agent which is allowed to have this planning
phase. Let $B$ be the total reward that the planning agent distributes along the
edges.

\begin{claim}
  The amount of additional reward a planning agent can add to the graph is at
  most the difference between the optimal reward and the reward received by a
  standard sophisticated agent divided by $b-1$. In other words,
  $$B \le \frac{R_o(s) - R_s(s)}{b-1}$$
\end{claim}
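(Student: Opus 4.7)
The plan is to exploit the incentive-compatibility condition for the planning agent at the moment of the planning decision. At that moment, the agent simultaneously picks the budget $B$ and its distribution over edges, after which the traversal proceeds as a standard sophisticated agent on the modified graph. First I would write down the perceived utility of any planning choice. Because the outlay $B$ is paid in the present step while all rewards are collected on subsequent steps, present bias inflates only the cost by the factor $b$ and leaves the rewards at face value, giving perceived planning utility $R_p(s)-bB$. The do-nothing option yields perceived utility $R_s(s)$. Since the planning agent picks the option maximizing perceived utility, any $B$ it actually chooses must satisfy $R_p(s)-bB \ge R_s(s)$.

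Next I would bound $R_p(s)$ from above using only the structural fact that the modification adds a total reward mass of $B$. Writing $\delta(u,v) \ge 0$ for the reward added to edge $(u,v)$, with $\sum_{(u,v)} \delta(u,v) = B$, any $s$-$t$ path $P$ in the modified graph has reward equal to its original reward plus $\sum_{(u,v)\in P}\delta(u,v) \le B$, hence at most $R_o(s)+B$. In particular, the reward the sophisticated agent collects in the modified graph satisfies $R_p(s) \le R_o(s)+B$, since it is bounded by the optimum on that graph.

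Combining the two inequalities, $R_s(s) \le R_p(s)-bB \le R_o(s)+B-bB = R_o(s)-(b-1)B$, and rearranging gives $B \le (R_o(s)-R_s(s))/(b-1)$, as claimed. There is no serious obstacle in this argument; the only point requiring care is the asymmetric application of present bias at the planning phase, where the immediate cost $B$ is inflated to $bB$ but the distributed rewards, being collected in later steps, enter the perceived utility at their true value. That asymmetry is precisely what produces the $b-1$ in the denominator and makes the bound meaningful for $b>1$.
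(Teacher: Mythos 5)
Your argument is correct and is essentially identical to the paper's proof: the same incentive-compatibility inequality $R_p(s) \ge R_s(s) + bB$ from the planning phase, the same upper bound $R_p(s) \le R_o(s) + B$ from the fact that only $B$ units of reward are added, and the same algebra combining them. The extra care you take in justifying why the outlay is inflated by $b$ while the redistributed rewards are not is a welcome elaboration but does not change the route.
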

\begin{proof}
  Since the perceived cost of taking on a cost of $B$ during the planning phase
  is $b \cdot B$, we know that the planning agent must get a reward of at least
  $R_s(s) + b \cdot B$. However, we know that $R_p(s) \le R_o(s) + B$, since the
  best the planning agent can do is follow the optimal path and reclaim the
  reward of $B$ that it put along the edges. Thus,
  \begin{align*}
    R_o(s) + B &\ge R_p(s) \\
               &\ge R_s(s) + b \cdot B \\
    R_o(s) - R_s(s) &\ge (b-1)B \\
    \frac{R_o(s) - R_s(s)}{b-1} &\ge B
  \end{align*}
\end{proof}

\begin{claim} \label{clm:planning-bad}
  There are graphs for which a planning agent has an exponentially bad reward
  ratio.
\end{claim}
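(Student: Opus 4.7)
My plan is to exhibit a family $\{G_n\}$ where, even with the ability to pre-distribute a budget $B$ along the edges, the planning agent's net reward $R_p(s) - B$ stays within a polynomial factor of $R_s(s)$, while $R_o(s)$ grows exponentially in $n$. The starting point is the previous claim, which tells us $B \le (R_o(s) - R_s(s))/(b-1)$, so the real question is whether the agent can \emph{convert} a given budget into a dramatic change in behavior. In the fan graph of Figure~\ref{fig:soph_reward_ratio} a single $\varepsilon$-boost at the last edge cascades all the way back to $s$, which is exactly what we need to defeat: I will design a graph in which no such cascade is possible, so any extra reward collected along the eventual path is essentially paid for one-to-one by the budget.

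Concretely, I will work with a stretched ``staircase'' construction on nodes $v_0 = s, v_1, \dots, v_n = t$, with continuation edges $(v_i, v_{i+1})$ of weight $a_i$ (growing geometrically so the optimal path reward $R_o(s) = \sum a_i$ is $\Theta(b^n)$), together with a direct shortcut edge $(v_i, t)$ of weight $r_i$. The key calibration is to pick the $r_i$ so that (i) at every node the sophisticated agent strictly prefers the shortcut, yielding an $R_s(s)$ that is $O(\mathrm{poly}(n))$; and (ii) the margin $br_i - R_s(v_{i+1})$ is large enough in absolute terms that even if downstream nodes were induced to continue, the induced jump in $R_s(v_{i+1})$ is too small to flip node $v_i$. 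Concretely, this requires $r_i$ to dominate the maximum possible value $R_s(v_{i+1})$ could reach under \emph{any} boost, not just the unboosted value. A candidate choice like $r_i = \Theta(a_i \cdot (n-i))$ with $a_i = b^i$ makes the ``no cascade'' condition hold uniformly.

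Next I will analyze an arbitrary budget allocation $\{\delta_e\}$ with $\sum \delta_e = B$. For any path $P$ that the sophisticated agent ends up taking on the modified graph, a backward induction along $P$ shows that at each node $v_i \in P$ we must have $\delta_{(v_i,v_{i+1})} + (\text{added reward placed downstream on } P) \ge br_i - R_s^{\mathrm{new}}(v_{i+1})$. By (ii), the right-hand side is bounded below by a constant fraction of $r_i$, so the total boost spent on edges of $P$ is at least a constant times $\sum_{v_i \in P} r_i$. Combining this with the edge-reward sum along $P$, I will show $R_p(s) - B \le R_s(s) + O(1)$, which together with the bound $R_o(s) = \Omega(b^n)$ yields the exponential reward ratio.

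The main obstacle is point (ii): guaranteeing that a single-edge boost, no matter where placed, never propagates backward through the staircase. The tricky piece is bounding $R_s^{\mathrm{new}}(v_{i+1})$ uniformly over all feasible budget allocations, because in principle the planning agent can boost several edges simultaneously to try to exceed the $br_i$ threshold at node $v_i$. I expect to handle this by viewing the boost allocation as a flow that must ``pay'' at every level separately, and then using a potential argument tying the accumulated boost at level $i$ to $(b-1) \cdot (\text{reward gained at levels} \ge i)$; this converts the cascade analysis into a linear programming-style lower bound on $B$ for any given collected reward, which is the heart of the exponential gap.
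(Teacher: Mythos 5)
There is a genuine gap here, and it sits exactly where you flagged it. Your plan rests on calibrating the shortcuts $r_i$ so that no feasible boost can ever flip a node, but your candidate calibration $a_i = b^i$, $r_i = \Theta(a_i(n-i))$ does not achieve this: a boost of size roughly $b^j$ on the single edge $(v_j, v_{j+1})$ flips $v_j$, and once $v_j$ is flipped its new continuation value $R_s^{\mathrm{new}}(v_j) \approx a_j + \delta + r_{j+1}$ already exceeds what is needed to flip $v_{j-1}$ \emph{with no further boost}, because $r_{j+1} \approx 2b^{j+1}(n-j-1)$ dominates $b\,r_{j-1} - b\,a_{j-1} \approx 2b^{j}(n-j+1)$. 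The cascade you set out to prevent therefore propagates all the way back to $s$ for a one-edge boost of cost about $b^{n-1}$, which is affordable, and the planning agent ends up within a constant factor of $R_o(s)$. Worse, this failure is not just a wrong choice of constants: if you insist that $r_i$ dominate $R_s^{\mathrm{new}}(v_{i+1})$ over \emph{all} feasible allocations, then since $R_s^{\mathrm{new}}(v_{i+1})$ can approach $R_o(v_{i+1})$, you are forced to make $r_0 = R_s(s)$ within a factor $b$ of $R_o(s)$, destroying the exponential gap. The global budget bound $B \le (R_o(s)-R_s(s))/(b-1)$ that you start from is exponentially large and gives no leverage against this.

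The paper's proof escapes this tension with an idea your writeup is missing: the budget is bounded \emph{self-referentially} by the reward the agent actually collects on the induced path, not by $R_o(s)$. If the planning agent ends up taking the path that exits at $v_k$, it collects at most $x_k + B$, and the planning-phase rationality condition $R_s(s) + bB \le x_k + B$ forces $B \le (x_k-1)/(b-1)$ --- small whenever the collected reward is small. One then needs only a \emph{single} local inequality at $v_{k-1}$: to continue, $b x_{k-1} \le bB + x_k \le \frac{2b-1}{b-1}x_k - \frac{b}{b-1}$, and choosing the fan rewards to grow at rate exactly $x_k/x_{k-1} = \frac{b(b-1)}{2b-1}$ (which exceeds $1$ for $b$ large enough, so the ratio is still exponential) makes this a contradiction for every $k \ge 1$. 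No uniform ``no cascade under any allocation'' lemma, no potential/LP argument over all allocations --- just a contradiction at the last node before the exit, using the fact that a large collected reward would be needed to justify a large budget in the first place. If you want to salvage your route, you would need to actually carry out the potential argument you sketch, and the analysis above suggests it cannot be made to work with shortcuts that are only polynomially larger than the continuation edges.
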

\begin{proof}
  Consider the graph in Figure~\ref{fig:planning_bad}, with $x_i =
  \p{\frac{b(b-1)}{2b-1}}^i$. We will show that it is impossible for a planning
  agent to take any path other than $s \to t$.

  Assume towards contradiction that there is some $k$ such that the planning
  agent can take the path $s \to v_1 \to \dots \to v_k \to t$. This means that
  the agent will receive a reward of $x_k+B$. However,
  since the perceived cost of adding a reward of $B$ to the graph is $b \cdot B$
  during the planning phase, we know
  \begin{align*}
    R_s(s) + b \cdot B &\le x_k + B \\
    B &\le \frac{x_k-1}{b-1} \tag{$R_s(s)=1$}
  \end{align*}
  However, in order to motivate the agent to go from $v_{k-1}$ to $v_k$, the
  perceived reward of $v_{k-1} \to t$ must be less than the perceived reward of
  $v_{k-1} \to v_k \to t$. Let $B_k$ be the reward placed on the $v_{k-1} \to
  v_k$ edge. This means
  \begin{align*}
    b x_{k-1} &\le b \cdot B_k + x_k \\
              &\le b \cdot B + x_k \\
              &\le b \p{\frac{x_k-1}{b-1}} + x_k \\
              &= \p{\frac{2b-1}{b-1}}x_k - \frac{b}{b-1} \\
    x_{k-1} &\le \p{\frac{2b-1}{b(b-1)}} x_k - \frac{b}{b-1}
  \end{align*}
  This is a contradiction because by definition, $x_{k-1} =
  \p{\frac{2b-1}{b(b-1)}} x_k$. Thus, the agent cannot reach $x_k$ for any $k
  \ge 1$, so it can only take the path $s \to t$.

\figureswraper{
  \begin{figure}[ht]
    \centering
    \begin{tikzpicture}[->,shorten >=1pt,auto,node distance=2cm, thin]
      \node (0) [circle, draw, minimum size=1cm] at (0,0) {$s$};
      \node (1) [circle, draw, minimum size=1cm] at (1,2) {$v_1$};
      \node (2) [circle, draw, minimum size=1cm] at (3,3) {$v_2$};
      \node (3) [circle, draw, minimum size=1cm] at (5,2) {$v_3$};
      \node (4) [circle, draw, minimum size=1cm] at (6,0) {$v_4$};
      \node (5) [circle, draw, minimum size=1cm] at (3,0) {$t$};
      
      \path
      (0) edge node {0} (1)
      (1) edge node {0} (2)
      (2) edge node {0} (3)
      (3) edge node {0} (4)

      (0) edge node {1} (5)
      (1) edge node {$x_1$} (5)
      (2) edge node {$x_2$} (5)
      (3) edge node {$x_3$} (5)
      (4) edge node {$x_4$} (5)
      ;
    \end{tikzpicture}
    \caption{Exponentially bad reward ratio for planning agent.}
    \label{fig:planning_bad}
  \end{figure}
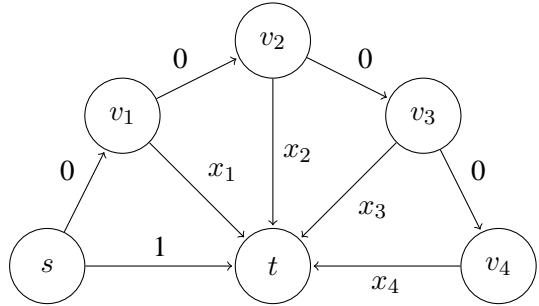
}
\end{proof}

\subsection{Zero-Reward Edges} \label{app:zero-cost}
We consider the second commitment device described in
Section~\ref{sec:commitment}, by which an agent can add edges of reward zero to
the graph.

\begin{claim}
  The reward ratio for an agent that can add zero-reward edges is at most 
  $bn$, where $n$ is the number of nodes in the graph. In other words, $$bn
  \cdot R_z(s) \ge R_o(s)$$
\end{claim}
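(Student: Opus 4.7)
The plan is to use a single zero-reward shortcut from $s$ to a heavy edge on the optimal path. Let $P^* = s = u_0 \to u_1 \to \dots \to u_k = t$ be an optimal path, so $k \le n$ and $\sum_i r(u_i, u_{i+1}) = R_o(s)$. By pigeonhole, there is an index $i$ with $w := r(u_i, u_{i+1}) \ge R_o(s)/n$. Form $G'$ from $G$ by adding the single edge $(s, u_i)$ with reward $0$; this is the only modification.

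First I would show that in $G'$, the reward collected by a sophisticated agent starting at $u_i$ satisfies $R_s(u_i) \ge w$. Since the only added edge is incident to $s$, the subgraph reachable from $u_i$ is identical in $G$ and $G'$, so the agent's behavior from $u_i$ is unchanged. Let $w^*$ be the successor the agent actually chooses at $u_i$. By the sophisticated decision rule,
\[
b \cdot r(u_i, w^*) + R_s(w^*) \;\ge\; b \cdot r(u_i, u_{i+1}) + R_s(u_{i+1}) \;\ge\; b w .
\]
Multiplying the identity $R_s(u_i) = r(u_i, w^*) + R_s(w^*)$ by $b$ and using $b \ge 1$ on the tail term gives $b \cdot R_s(u_i) \ge b \cdot r(u_i, w^*) + R_s(w^*) \ge b w$, hence $R_s(u_i) \ge w$.

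Second, I would run the same manipulation at $s$ in $G'$. Let $s'$ be the successor chosen at $s$. Because the new edge $(s, u_i)$ of reward $0$ is among the options, the decision rule gives $b \cdot r(s, s') + R_s(s') \ge b \cdot 0 + R_s(u_i) \ge w$. Then
\[
b \cdot R_z(s) \;=\; b \cdot r(s, s') + b \cdot R_s(s') \;\ge\; b \cdot r(s, s') + R_s(s') \;\ge\; w \;\ge\; R_o(s)/n ,
\]
which rearranges to $bn \cdot R_z(s) \ge R_o(s)$.

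The calculation is short; the real content is the lemma-like inequality $b \cdot R_s(v) \ge \max_w\bigl[b \cdot r(v,w) + R_s(w)\bigr]$, which converts a biased perceived-reward lower bound into an honest lower bound at the cost of one factor of $b$, and which we apply once at $u_i$ and once at $s$. The main thing to verify carefully is that adding the shortcut does not perturb downstream play, but since the new edge is incident only to $s$, every $R_s(v)$ for $v \ne s$ is untouched, so it suffices to reason as above.
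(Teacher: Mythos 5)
Your proof is correct and follows essentially the same approach as the paper's: both locate, by pigeonhole, an edge of reward at least $R_o(s)/n$ on the optimal path, add a single zero-reward edge from $s$ to its tail, and use the factor-$b$ conversion between perceived and actually collected reward. Your write-up is in fact slightly more careful than the paper's, since by arguing through the argmax at $s$ you explicitly cover the case where the agent prefers some third option over both the new shortcut and its original path.
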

\begin{proof}
  Let $R_z(s)$ be the maximum reward attainable by a sophisticated agent allowed
  to add zero-reward edges to the graph. Assume that the reward ratio without
  adding any zero-reward edges is larger than than $bn$, meaning that $bn
  R_s(s) < R_o(s)$ (if not, then the claim holds trivially). Since there are $n$
  vertices, the path taken by the optimal agent must have length at most $n$.
  Therefore, there must be some edge on that path with reward at least
  $\frac{R_o(s)}{n} > b \cdot R_s(s)$. This means that if the zero-reward agent
  were to add an edge of reward 0 from $s$ to the beginning of the edge of
  maximum reward along the sophisticated agent's path, the agent would prefer to
  take the zero reward edge, as the perceived reward of doing so is more than $b
  R_s(s)$, while the perceived reward of not taking the zero-reward edge and
  following the rest of the sophisticated agent's path is at most $b R_s(s)$.
  Thus, the zero-reward agent would achieve a reward of at least
  $\frac{R_o(s)}{n}$, meaning its reward ratio would be smaller than
  $bn$.
\end{proof}

Figure~\ref{fig:zero_bad} shows that it's possible for the reward ratio between
the optimal agent and the sophisticated agent to be linear and for zero-reward
edges to have no benefit. The optimal agent gets a reward of $n(b-1)$, but the
sophisticated agent gets a reward of 1 (assuming ties are broken by choosing the
largest outgoing edge). Furthermore, there is no way to add zero-reward edges to
the graph to increase the reward received by the sophisticated agent.

\begin{figure}[ht]
  \centering
  \begin{tikzpicture}[->,shorten >=1pt,auto,node distance=2cm, thin]
    \node (0) [circle, draw, minimum size=1cm] at (0,0) {$s$};
    \node (11) [circle, draw, minimum size=1cm] at (-1,2) {$v_1'$};
    \node (1) [circle, draw, minimum size=1cm] at (1,2) {$v_1$};
    \node (22) [circle, draw, minimum size=1cm] at (1,4) {$v_2'$};
    \node (2) [circle, draw, minimum size=1cm] at (3,3) {$v_2$};
    \node (33) [circle, draw, minimum size=1cm] at (5,4) {$v_3'$};
    \node (3) [circle, draw, minimum size=1cm] at (5,2) {$v_3$};
    \node (44) [circle, draw, minimum size=1cm] at (7,2) {$v_4'$};
    \node (4) [circle, draw, minimum size=1cm] at (6,0) {$v_4$};
    \node (5) [circle, draw, minimum size=1cm] at (3,0) {$t$};
    
    \path
    (0) edge node {0} (11)
    (11) edge node {$b-1$} (1)
    (1) edge node {0} (22)
    (22) edge node {$b-1$} (2)
    (2) edge node {0} (33)
    (33) edge node {$b-1$} (3)
    (3) edge node {0} (44)
    (44) edge node {$b-1$} (4)

    (0) edge node {1} (5)
    (1) edge node {1} (5)
    (2) edge node {1} (5)
    (3) edge node {1} (5)
    (4) edge node {1} (5)
    ;
  \end{tikzpicture}
  \caption{No benefit to adding zero-reward edges.}
  \label{fig:zero_bad}
\end{figure}
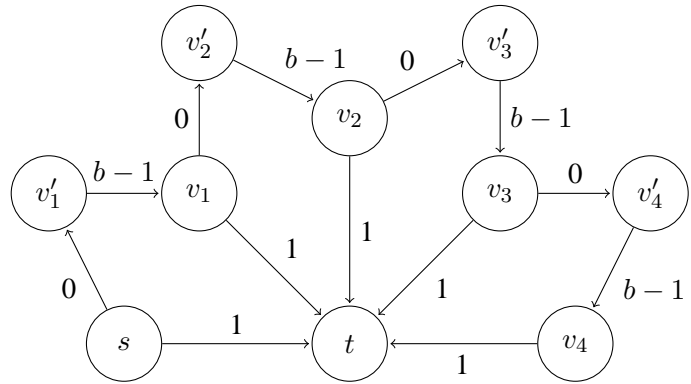

\end{appendix}

\end{document}